\documentclass[sn-mathphys-num]{sn-jnl}
\usepackage{graphicx}%
\usepackage{multirow}%
\usepackage{amsmath,amssymb,amsfonts}%
\usepackage{amsthm}%
\usepackage{mathrsfs}%
\usepackage[title]{appendix}%
\usepackage[dvipsnames]{xcolor}%
\usepackage{textcomp}%
\usepackage{manyfoot}%
\usepackage{booktabs}%
\usepackage{algorithm}%
\usepackage{algorithmicx}%
\usepackage{algpseudocode}%
\usepackage{listings}%
\usepackage[normalem]{ulem}

\usepackage{tikz}
\usetikzlibrary{automata, arrows.meta, positioning, shapes}
\usetikzlibrary{arrows.meta, 
                automata,
                intersections,
                positioning,
                quotes,
                matrix,
                bbox 
                }
\usetikzlibrary{shapes,arrows,fit}
\newtheorem{theorem}{Theorem}
\newtheorem{observation}[theorem]{Observation}
\newtheorem{definition}[theorem]{Definition}

\newtheorem{lemma}[theorem]{Lemma}
\newtheorem{corollary}[theorem]{Corollary}
\newtheorem{example}[theorem]{Example}

\providecommand{\N}{\mathbb{N}}
\providecommand{\mg}{\operatorname{mg}}
\providecommand{\lcp}{\operatorname{lcp}}
\providecommand{\lcs}{\operatorname{lcs}}

\providecommand{\cont}{\operatorname{content}}
\providecommand{\Fill}{\mathcal{F}}
\providecommand{\eLlen}{\varepsilon\mathcal{L}_{\mathrm{len}}}

\title{Learning Erasing Equal-Length Relational Pattern Languages in the Limit from Positive Data over Binary Alphabets}

\author*[1]{\fnm{S.\ Mahmoud} \sur{Mousawi}}\email{mousawi.s.m@gmail.com, seyyed.mousavi@uregina.ca}

\author*[1,2]{\fnm{Sandra} \sur{Zilles}}\email{sandra.zilles@uregina.ca}

\affil[1]{\orgdiv{Department of Computer Science}, \orgname{University of Regina}, \orgaddress{\country{Canada}}}

\affil[2]{\orgdiv{Alberta Machine Intelligence Institute}}

\abstract{Relational pattern languages generalize Angluin’s pattern languages by requiring the strings substituted for repeated occurrences of a variable to stand in a given relation, rather than to be identical. We study erasing pattern languages under the equal-length relation, where all occurrences of a variable must be replaced by strings of the same length. Over alphabets with at least three letters, this class is known to be learnable in the limit from positive data. We show that over a binary alphabet it is not, which resolves an open question from earlier work. Our proof reveals that non-learnability already holds for the restricted subclass of languages generated by patterns that have exactly three maximal substrings of variables. We then show that patterns with at most two maximal substrings of variables are learnable in the limit from positive data for every alphabet of size at least two. If the target pattern is known to have at most one maximal substring of variables, this can even be achieved by an iterative learner, which updates its hypothesis using only the current example and its previous hypothesis.}

\begin{document}

\maketitle

\section{Introduction}

Pattern languages, introduced by Angluin~\cite{Angluin80patterns}, have been an object of study in algorithmic learning theory for decades. A pattern is a finite string of terminal symbols and variables. Its language consists of all words obtained by replacing each variable with a string over the terminal alphabet, where all occurrences of the same variable are replaced by the same string. The literature distinguishes non-erasing pattern languages~\cite{Angluin80patterns}, in which variables cannot be replaced with the empty string, from erasing pattern languages, where empty substitutions are allowed~\cite{Shinohara82}. Both variants have been studied extensively in Gold’s model of learning in the limit from positive data~\cite{Gold67}, where a learner is gradually presented with a sequence of all strings in an unknown target language $L$ and must eventually settle on a correct description for $L$. The class of non-erasing pattern languages is learnable in this model~\cite{Angluin80patterns}, even by an iterative learner, which updates its hypothesis based only on the previous hypothesis and the current example~\cite{LangeW91}. By contrast, erasing pattern languages are not learnable in the limit from positive data, when the underlying alphabet has size 2, 3, or 4~\cite{Reidenbach06,Reidenbach08}.

Relational pattern languages~\cite{GeilkeZ11} are a generalization of the notion of pattern language, in which repeated variables need not be replaced by the same string, but instead by strings that are in a specified relation. Natural choices for this relation are, for example, equality, reversal, equal length, or abelian equivalence~\cite{HolteMZ22,MousawiZ25,JansenNPWW25}; the equality relation here recovers the original notion of pattern language.

The equal-length relation is arguably the simplest of the relations considered in the literature~\cite{HolteMZ22,MousawiZ25}. All occurrences of a variable must be replaced by strings of the same length, but the strings themselves can be chosen independently. As a result, the language generated by a pattern under this relation depends only on its terminal symbols and on the admissible lengths of the strings between them. 
Non-erasing equal-length relational pattern languages are trivially learnable in the limit from positive data; moreover it is known that they can be distinguished by small sets of short strings when the alphabet size is at least three~\cite{HolteMZ22}. In the erasing case, it was observed that the positive learnability result for alphabets with at least three letters carries over from the non-erasing case~\cite{MousawiZ25}. However, binary alphabets posed a challenge. It was shown that many specific forms of patterns generate erasing equal-length relational languages that have so-called \emph{telltales}\/ even for binary alphabets~\cite{MousawiZ25}. A telltale of a language $L$ with respect to a family $\mathcal{L}$ of languages is a finite subset of $L$ that is not contained in any language $L'\in\mathcal{L}$ with $L'\subset L$. The uniform enumerability of a system of telltales is a necessary and sufficient condition for the learnability of a family $\mathcal{L}$ with uniformly decidable membership~\cite{Angluin80telltales}. Despite numerous results on the existence of telltales for subclasses of erasing equal-length relational pattern languages, it was left open whether the full class $\varepsilon\mathcal{L}_{len}$ of erasing equal-length relational pattern languages has telltales and thus whether it is learnable in the limit from positive data, given a binary alphabet.

We resolve this open question by presenting a pattern $p^*$ whose erasing equal-length relational language $\varepsilon L_{len}(p^*)$ over a binary alphabet does not have a telltale with respect to $\varepsilon\mathcal{L}_{len}$. Consequently, for alphabets of only two letters, the class $\varepsilon\mathcal{L}_{len}$ is not learnable in the limit from positive data. We complement this result with positive learnability results for languages generated by patterns with a small number of maximal substrings of variables. 



The first technical ingredient is to represent every language $\varepsilon L_{len}(p)\in\eLlen$  by the sequence of terminals in the underlying pattern $p$ and by a finitely generated monoid of \emph{gap vectors}.\footnote{The notion of gap vectors is inspired by the notion of length tuples for equal-length relational patterns \cite{HolteMZ22}; we here extend this notion to study the corresponding monoid structure.} Roughly speaking, a word belongs to $\varepsilon L_{len}(p)$ if it can be cut around the terminals of $p$ so that the lengths of the pieces form a vector of the monoid. Suppose a learning algorithm is presented with positive examples only, i.e., words in the language $\varepsilon L_{len}(p)$, and has to identify $p$. Any single positive example shows a word, but not where it has to be cut. The question we study is how much of the monoid can nevertheless be recovered from the examples. Over an alphabet of three or more letters, every gap vector has a word that can be cut in only one way, so that positive examples reveal the monoid and we obtain learnability in the limit from positive examples. 
Our focus is the binary alphabet case. The main results of our paper are:
\begin{itemize}
 \item Over a binary alphabet, a word can sometimes be cut differently, when shifting two equal adjacent terminals by one position. This allows one variable of a pattern to be ``imitated'' by other variables, and learning from positive data becomes impossible (Section~\ref{sec:len-binary}). We prove this by providing a pattern $p$ such that $\varepsilon L_{len}(p)$ does not possess a telltale with respect to the class $\eLlen$. 
 \item The pattern $p$, as well as all the patterns that witness the fact that $p$ does not have a telltale, have exactly three \emph{variable blocks}, i.e., they can be factored into substrings of only terminal symbols and substrings of only variable symbols, such that exactly three substrings of variable symbols are used. We therefore obtain the corollary that the patterns with exactly three variable blocks generate a class of erasing equal-length relational pattern languages that is not learnable in the limit from positive data, when the alphabet size is 2.
 \item We contrast this by showing that the corresponding class is learnable in the limit from positive data, when the underlying patterns contain at most two variable blocks, regardless of the alphabet size (Section~\ref{sec:twoblocks}).
  \item If it is known that the underlying pattern $p$ has at most one variable block, we obtain an even stronger positive learnability result, again regardless of the alphabet size. Learning in the limit from positive data can then even be achieved by an \emph{iterative}\/ learner, which updates its hypothesis based only on the previous hypothesis and the most recently seen word, ignoring all previously collected data (Section~\ref{sec:len-oneblock}).
\end{itemize}

\section{Preliminaries}
\label{sec:len-prelim}

\subsection{Equal-Length Patterns and Their Languages}

Given a set $W$, we denote the set of finite strings over $W$ by $W^*$ and its subset of non-empty finite strings by $W^+$. If $w\in W^*$ is any string over $W$, then we denote by $|w|$ the length of $w$. We can then write $w=w[1]w[2]\cdots w[|w|]$ for $w[1],w[2],\ldots,w[|w|]\in W$, where $w[i]$ denotes the symbol at the $i$th position in $w$. Moreover, for any $i,j\in\{1,\ldots,|w|\}$, the term $w[i:j]$ refers to the substring $w[i]w[i+1]\ldots w[j]$ of $w$ if $i\le j$, and to the empty string $\varepsilon$ if $i>j$. Likewise, $w[i:j]:=\varepsilon$ whenever $j=0$.

Throughout this document, we assume that $\Sigma$ is a finite alphabet of cardinality at least 2, and $X=\{x_1,x_2,x_3,\ldots\}$ is a countably infinite set of variables, disjoint from $\Sigma$. A pattern is a finite non-empty string over $\Sigma\cup X$, with the property that variable $x_i$ may occur in $p$ only if all variables $x_j$, for $1\le j<i$ also occur in $p$. If $p$ is a pattern containing all and only the variables $x_1,\ldots,x_i$ for some $i\ge 1$, and variable $x_j$ occurs exactly $n_j$ ($\ge 1$) times in $p$, then an \emph{equal-length substitution}\/ for $p$ is a mapping defined by $$x_j\mapsto (w_{j,1},w_{j,2},\ldots,w_{j,n_j})\mbox{ for }1\le j\le i\,,$$
where $w_{j,1},w_{j,2},\ldots,w_{j,n_j}\in\Sigma^*$ and $$|w_{j,1}|=|w_{j,2}|=\cdots =|w_{j,n_j}|\mbox{ for }1\le j\le i\,.$$
This substitution generates from $p$ the string that is obtained if the $k$th occurrence of the variable $x_j$ in $p$ (counted from the left) is replaced by $w_{j,k}$, for each $j\in\{1,\ldots,i\}$ and each $k\in\{1,\ldots,n_j\}$. The term \emph{equal-length}\/ refers to the fact that such a substitution replaces all occurrences of the same variable with strings of the same length. This stands in contrast with the original definition of pattern languages, where multiple occurrences of the same variable have to be substituted by the same string~\cite{Angluin80patterns,Shinohara82}. If $p$ has no variables, i.e., $p\in\Sigma^+$, then the only word that can be generated from $p$ is the word $p$ itself. 

For example, consider the pattern $p=x_1\,a\,a\,x_1\,x_2\,b$ and the equal-length substitution defined by $x_1\mapsto(ab,aa)$ and $x_2\mapsto(b)$. This substitution generates the string $ab\,a\,a\,aa\,b\,b$ from $p$. The (erasing) equal-length relational language of the pattern $p$ (over $\Sigma$) is the set of all strings that are generated from $p$ by substitutions. 

Earlier work also considered \emph{non-erasing}\/ equal-length relational pattern languages, which are defined in the same way, but using only substitutions for which the words $w_{j,1},w_{j,2},\ldots,w_{j,n_j}$ belong to $\Sigma^+$, i.e., are non-empty~\cite{HolteMZ22}. Aside from equal-length relational pattern languages, other relations between the strings replacing multiple occurrences of the same variables have been considered in the literature, including reversal, abelian equivalence, et cetera~\cite{GeilkeZ11,HolteMZ22,JansenNPWW25}. From now on, we consider exclusively the equal-length relation and focus on (erasing) equal-length relational pattern languages.

Any pattern $p$ has a unique decomposition of the form
\[
  p=\pi_0\,c_1\,\pi_1\,c_2\cdots c_m\,\pi_m,\qquad c_k\in\Sigma,\ \pi_j\in X^\ast .
\]
The non-empty $\pi_j$ are called the \emph{variable blocks}\/ of $p$. We call $s(p)=c_1\cdots c_m$ the \emph{terminal word} of $p$ and $\pi_0,\dots,\pi_m$ its \emph{gaps}. For instance, $p=x_1\,a\,a\,x_1\,x_2\,b$ has the gaps $\pi_0=x_1$, $\pi_1=\varepsilon$, $\pi_2=x_1\,x_2$ and $\pi_3=\varepsilon$. Its variable blocks are $x_1$ and $x_1\, x_2$.

For a variable $x$ in $p$ we define its \emph{gap vector} $\gamma(x)\in\N^{m+1}$ by letting $\gamma(x)_j$ be the number of occurrences of $x$ in the gap $\pi_j$. For a vector $v\in\N^{m+1}$ we write $|v|=v_0+\dots+v_m$ for its \emph{size}, and for a set $P\subseteq\N^{m+1}$ we write $\langle P\rangle$ for the submonoid of $\N^{m+1}$ generated by $P$.
For a submonoid $\langle P\rangle\subseteq\N^{m+1}$ let $\mg(\langle P\rangle)$ be the set of minimal generators of $\langle P\rangle$, i.e., the set of non-zero elements of $\langle P\rangle$ that are not the sum of two non-zero elements of $\langle P\rangle$; here $\mg(\{0\}):=\emptyset$. The set $\mg(\langle P\rangle)$ generates $\langle P\rangle$ and is always contained in $\langle P\rangle$. For a finite set $A$ of integers, we use $\mg(A) := mg(\langle A\rangle)$. 

\medskip

\begin{definition}
Let $s=c_1\cdots c_m$. A \emph{split} of a word $w$ (with respect to $s$) is a factorization $w=w_0c_1w_1\cdots c_mw_m$ with $w_j\in\Sigma^\ast$. 
Its \emph{vector} is $(|w_0|,\dots,|w_m|)$, and $D_s(w)$ denotes the set of vectors of all splits of $w$. 
The \emph{fillings} of a vector $v\in\N^{m+1}$ with respect to $s$ are the words in the set
\[
  \Fill_s(v)=\{w_0c_1w_1\cdots c_mw_m \mid w_j\in\Sigma^\ast,\ |w_j|=v_j\}\,.
\]
For a set $P\subseteq\N^{m+1}$ we let
\[
  L_s(P)=\{w\in\Sigma^\ast \mid D_s(w)\cap\langle P\rangle\neq\emptyset\}=\bigcup\nolimits_{v\in\langle P\rangle}\Fill_s(v).
\]
\end{definition}

The following observation, expressing the equal-length relational language of a pattern $p$ in terms of the fillings of its gap vectors, is immediate.

\medskip

\begin{observation}[Normal form]\label{lem:len-normal}
Let $p$ be a pattern with terminal word $s$, and let $P$ be the set of gap vectors of its variables. Then
\[
  \varepsilon L_{\mathrm{len}}(p)=L_s(P).
\]
Conversely, for every word $s$ and every finite set $P\subseteq\N^{|s|+1}\setminus\{0^{|s|+1}\}$ there is a pattern $p$ with $\varepsilon L_{\mathrm{len}}(p)=L_s(P)$. 
\end{observation}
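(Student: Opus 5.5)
We prove the two directions separately. In each direction we compare the length of each piece of a word with the number of variable occurrences in the corresponding gap. For the first direction, write $p=\pi_0c_1\pi_1\cdots c_m\pi_m$ with variables $x_1,\dots,x_i$, and $s=c_1\cdots c_m$.

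\emph{Inclusion $\varepsilon L_{\mathrm{len}}(p)\subseteq L_s(P)$.} Suppose $w$ is generated by an equal-length substitution in which every occurrence of $x_j$ receives a word of length $\ell_j$. Let $w_k$ be the image of the gap $\pi_k$. Then $w=w_0c_1w_1\cdots c_mw_m$ is a split of $w$ with respect to $s$, and
\[
|w_k|=\sum_{j=1}^{i}\ell_j\,\gamma(x_j)_k .
\]
Hence the vector of this split is $\sum_j \ell_j\gamma(x_j)\in\langle P\rangle$, so $w\in L_s(P)$.

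\emph{Inclusion $L_s(P)\subseteq\varepsilon L_{\mathrm{len}}(p)$.} Let $w\in\Fill_s(v)$ with $v=\sum_j \ell_j\gamma(x_j)$ for some $\ell_j\in\N$; such coefficients exist because $\langle P\rangle$ is the set of $\N$-combinations of $P$. Write $w=w_0c_1\cdots c_mw_m$ with $|w_k|=v_k$. Since $|w_k|$ equals the total number of variable positions in $\pi_k$ when $x_j$ is given length $\ell_j$, we can cut $w_k$ into consecutive pieces whose lengths follow the sequence of variables in $\pi_k$, with an occurrence of $x_j$ receiving a piece of length $\ell_j$. Assigning these pieces to the occurrences in order gives an equal-length substitution, since every occurrence of $x_j$ gets length $\ell_j$. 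This substitution generates $w$. If $p$ has no variables, then $P=\emptyset$, $\langle P\rangle=\{0\}$ and both sides equal $\{s\}$.

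\emph{Converse.} Given $s=c_1\cdots c_m$ and a finite set $P=\{v^{(1)},\dots,v^{(r)}\}\subseteq\N^{m+1}\setminus\{0\}$, define for each $k$
\[
\pi_k=x_1^{v^{(1)}_k}x_2^{v^{(2)}_k}\cdots x_r^{v^{(r)}_k},
\]
and set $p=\pi_0c_1\pi_1\cdots c_m\pi_m$. Because every $v^{(t)}$ is non-zero, each $x_t$ occurs somewhere in $p$. The variables are therefore exactly $x_1,\dots,x_r$, which satisfies the naming convention. The gap vector of $x_t$ is $v^{(t)}$, so the first part gives $\varepsilon L_{\mathrm{len}}(p)=L_s(P)$.

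Two degenerate cases remain.
\begin{itemize}
\item If $P=\emptyset$ and $m\ge1$, take $p=s$; then $\langle P\rangle=\{0\}$ and both languages are $\{s\}$.
\item If $m=0$, then $s$ is empty and $p$ consists only of variables. This pattern is non-empty when $P\neq\emptyset$. The case $m=0$ with $P=\emptyset$ gives the language $\{\varepsilon\}$, which no pattern generates; the statement must implicitly exclude it.
\end{itemize}

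The only delicate step is this bookkeeping of degenerate cases, together with checking that the variable-naming convention holds. The rest is a direct unwinding of the definitions.
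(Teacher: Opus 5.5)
Your proof is correct and is the direct unwinding of the definitions that the paper has in mind when it calls the observation immediate; the paper gives no written proof. Your remark on the degenerate case is also right: for $s=\varepsilon$ and $P=\emptyset$ one gets $L_s(P)=\{\varepsilon\}$, which no (non-empty) pattern generates, so the converse as literally stated needs $s\neq\varepsilon$ or $P\neq\emptyset$.
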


\medskip

\begin{example}\label{ex:len-splits}
Let $p=x_1\,a\,x_2x_2\,x_1\,b\,x_2\,x_3$. Its terminal word is $ab$ and the gap vectors are $(1,1,0)$, $(0,2,1)$ and $(0,0,1)$. The substitution $x_1\mapsto (a,a)$, $x_2\mapsto (b,a,a)$, $x_3\mapsto(\varepsilon)$ generates $aabaaba$; the terminals of $p$ give the split $a\cdot a\cdot baa\cdot b\cdot a$ with vector $(1,3,1)=(1,1,0)+(0,2,1)$. The same word has five further splits, with vectors $(0,1,4)$, $(1,0,4)$, $(0,4,1)$, $(3,1,1)$ and $(4,0,1)$. A word does not reveal which split was used to generate it; this ambiguity turns out to be the source of difficulties for learning equal-length relational pattern languages from positive examples (for binary alphabets).
\end{example}

\subsection{Learning in the Limit and Telltales}

The focus of this paper is on learning equal-length relational pattern languages in Gold's model of learning in the limit from positive data \cite{Gold67}.

A \emph{text} for $L$ is an infinite sequence $t_0,t_1,\dots$ with $\{t_i\mid i\in\N\}=L$. 

A learner is a mapping that takes as input any initial segment $t_0,t_1,\ldots,t_n$ of any text of any target language. If defined, its output is a representation of some language $L'$, which is to be understood as the learner hypothesizing the target language to be $L'$. In general, the goal of a learner is to eventually identify an unknown language $L$ in a known class $\mathcal{L}$ of languages, when gradually processing any text for $L$. 

\medskip

\begin{definition}
    Let $\mathcal{L}$ be a class of languages and $L\in\mathcal{L}$. Fix a numbering $(L_i)_{i\in\mathbb{N}}$ such that $\mathcal{L}\subseteq\{L_i\mid i\in\mathbb{N}\}$. A learner $A$ learns $L$ in the limit from positive data (with respect to $(L_i)_{i\in\mathbb{N}}$) if, for any text $t_0,t_1,t_2,\ldots$ for $L$, the following two properties hold:
    \begin{itemize}
        \item $A(t_0,\ldots,t_n)$ is defined for every $n\in\mathbb{N}$.
        \item There exists some $n_0\in\mathbb{N}$ such that $L_{A(t_0,\ldots,t_{n_0})}=L$ and $A(t_0,\ldots,t_n)=A(t_0,\ldots,t_{n_0})$ for every $n\ge n_0$.
    \end{itemize}
    The class $\mathcal{L}$ is learnable in the limit from positive data if there is a numbering $(L_i)_{i\in\mathbb{N}}$ and a learner that learns every $L\in\mathcal{L}$ from positive data with respect to  $(L_i)_{i\in\mathbb{N}}$.
\end{definition}

\medskip

Angluin's characterization of learnability in the limit from positive data \cite{Angluin80telltales} concerns computable learners for indexed families $(L_i)_{i\in\mathbb{N}}$; these are families for which a recursive procedure exists that, given $(i,w)\in\mathbb{N}\times\Sigma^*$, decides whether $w\in L_i$. Clearly, the class of all equal-length relational pattern languages is an indexed family, since its membership problem is uniformly decidable. The central notion in Angluin's characterization is that of a telltale: A \emph{telltale} of a language $L$ in a class $\mathcal{L}$ is a finite set $T\subseteq L$ such that no $L'\in\mathcal{L}$ satisfies $T\subseteq L'\subsetneq L$.

Angluin's result states that an indexed family $(L_i)_{i\in\mathbb{N}}$ of languages is learnable in the limit from positive data if and only if there is an effective procedure that, given an index $i$, enumerates a telltale for $L_i$ with respect to $\{L_i\mid i\in\mathbb{N}\}$. Note that Angluin's characterization builds on the same idea as the \emph{locking sequences} of Blum and Blum~\cite{BlumBlum1975}. For our negative result we need only the necessity part of the characterization, and we do not require computable learners or the recursive enumerability of telltales, i.e., the following  statement suffices for our purposes: 

\medskip

\begin{theorem}[\cite{Angluin80telltales,BlumBlum1975}]\label{thm:len-angluin}
If some (computable or non-computable) learner learns an indexed family $\mathcal{L}$ of languages in the limit from positive data, then every $L\in\mathcal{L}$ has a telltale in~$\mathcal{L}$.
\end{theorem}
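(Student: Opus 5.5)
We prove the contrapositive via the locking-sequence argument of Blum and Blum. Suppose a learner $A$ (computable or not) learns $\mathcal{L}$ in the limit from positive data with respect to some numbering $(L_i)_{i\in\mathbb{N}}$, and fix $L\in\mathcal{L}$. The goal is to find a finite sequence $\sigma$ of elements of $L$ such that $L_{A(\sigma)}=L$ and $A(\sigma\tau)=A(\sigma)$ for every finite sequence $\tau$ of elements of $L$. Such a $\sigma$ is a \emph{locking sequence}. Its content $T=\{\sigma_0,\ldots,\sigma_{|\sigma|-1}\}$ will then serve as the telltale.

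\textbf{Step 1: existence of a locking sequence.} Suppose no locking sequence exists. Then for every finite $\sigma$ over $L$, either $L_{A(\sigma)}\neq L$, or some extension $\sigma\tau$ over $L$ satisfies $A(\sigma\tau)\neq A(\sigma)$. Fix an enumeration $u_0,u_1,\ldots$ of $L$; if $L$ is finite, repeat elements. For $L=\emptyset$ the empty set is trivially a telltale. Build a text in stages. Let $\sigma^0$ be the empty sequence. Given $\sigma^k$, choose $\tau$ over $L$ with $A(\sigma^k\tau)\neq A(\sigma^k)$ if such a $\tau$ exists, and set $\sigma^{k+1}=\sigma^k\tau u_k$. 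Otherwise set $\sigma^{k+1}=\sigma^k u_k$; in that case $L_{A(\sigma^k)}\neq L$ and $A(\sigma^k\tau)=A(\sigma^k)$ for all $\tau$. The limit $t$ is a text for $L$, since every $u_k$ appears in it. Along $t$, either $A$ changes its hypothesis infinitely often, or from some stage on it stays at a fixed hypothesis $A(\sigma^k)$ with $L_{A(\sigma^k)}\neq L$. The latter holds because once the first case fails at $\sigma^k$, it fails at all later $\sigma^{k'}$, and $A$ keeps the same output. Either way $A$ does not learn $L$ on $t$, a contradiction. No computability of $A$ is needed here; we only use choice over countable sets.

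\textbf{Step 2: the content is a telltale.} Let $\sigma$ be a locking sequence and $T$ its content, so $T\subseteq L$ is finite. Suppose $L'\in\mathcal{L}$ with $T\subseteq L'\subsetneq L$. Take any text $t'$ for $L'$ and consider the sequence $\sigma t'$. It is a text for $L'$, because $T\subseteq L'$. Every initial segment of it has the form $\sigma\tau$ with $\tau$ over $L'\subseteq L$, so $A(\sigma\tau)=A(\sigma)$, which is an index for $L$. Hence $A$ converges to a hypothesis for $L\neq L'$ on a text for $L'$, contradicting that $A$ learns $L'$. Thus no such $L'$ exists, and $T$ is a telltale.

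The main obstacle is Step 1, the diagonal construction. The key point there is to check that interleaving the $u_k$ guarantees a genuine text for $L$, while the case analysis covers both non-convergence and convergence to a wrong hypothesis. Step 2 is routine once a locking sequence is available.
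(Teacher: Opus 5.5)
Your argument is correct and is the Blum--Blum locking-sequence argument that the paper invokes by citation (the paper does not reproduce a proof), and you rightly note that computability of the learner is never used. One small fix: in the paper's setup a learner is defined only on non-empty initial segments, so start the construction with $\sigma^0=u_0$ rather than the empty sequence; this also makes the locking sequence non-empty, so any $L'\supseteq T$ is non-empty and actually has a text in Step~2.
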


\medskip

In the context of this paper, we focus on learning equal-length relational pattern languages in the limit from positive data. Hypotheses output by the learner are (indices of) relational patterns in a fixed enumeration.


\section{Non-Learnability over Binary Alphabets}
\label{sec:len-binary}

In this section, let $\Sigma=\{a,b\}$. Reidenbach showed that the erasing pattern languages under equality relation are not learnable in the limit from positive data over small alphabets~\cite{Reidenbach06,Reidenbach08}, and Mousawi and Zilles~\cite{MousawiZ25} obtained the same result for the reversal relation over a binary alphabet. We now show that the equal-length relation behaves in the same way, even though its languages are determined by the lengths of the gaps alone.

We exhibit a language $L\in\eLlen$ and a chain of languages $L_0\subsetneq L_1\subsetneq L_2\subsetneq\cdots$ in $\eLlen$ whose union is $L$. Every finite subset of $L$ is then contained in some $L_t\subsetneq L$, so $L$ has no telltale. Therefore, by Theorem~\ref{thm:len-angluin}, no learner can identify all of these languages. 

The terminal word for our target language $L$ is $s=aab$. Gap vectors therefore have four coordinates, counting the variables before the first $a$, between the two $a$'s, between the second $a$ and $b$, and after $b$. Let
\[
  X_1=(1,0,1,0),\qquad X_2=(0,0,2,0),\qquad X_3=(0,0,0,1).
\]
The target is
\[
  L=\varepsilon L_{\mathrm{len}}(x_1\,a\,a\,x_1\,x_2x_2\,b\,x_3)=L_{aab}(\{X_1,X_2,X_3\})\,.
\]
For $t\ge0$ we let
\[
  L_t=L_{aab}\bigl(\{X_3\}\cup\{X_1+jX_2\mid 0\le j\le t\}\bigr).
\]
Since $X_1+jX_2=(1,0,2j+1,0)$, a pattern for $L_t$ has, for each $j\le t$, a  variable occurring once before $aa$ and $2j+1$ times between the second $a$ and $b$, in addition to the variable $y_3$ at the end of the pattern. For example,
\[
  L_0=\varepsilon L_{\mathrm{len}}(y_1\,a\,a\,y_1\,b\,y_3),\qquad
  L_1=\varepsilon L_{\mathrm{len}}(y_1\,y_2\,a\,a\,y_1\,y_2y_2y_2\,b\,y_3)\,.
\]
 The target pattern, as well as all patterns generating the languages $L_t$ for $t\in\mathbb{N}$, have three variable blocks each. The target has the variable $x_2\ne x_1$ occurring only in the middle variable block; the languages $L_t$ have every variable in the middle block also appear in the first variable block.

 For the remainder of this section we use the following notation:
 $$M=\langle X_1,X_2,X_3\rangle$$ and $$M_t=\langle X_3,\ X_1+jX_2\mid 0\le j\le t\rangle\,.$$
In particular, $M$ and $M_t$ are submonoids of $\mathbb{N}^4$.

We first make a simple observation on the structure of the monoids representing our target language $L$ and the languages $L_t$ in the chain.

\medskip

\begin{lemma}\label{lem:len-binary-monoids}
$M$ and $M_t$ satisfy the following:
\begin{align*}
  M   &=\{(\alpha,0,y,\delta)\mid \alpha,y,\delta\in\N,\ y\ge\alpha,\ y\equiv\alpha \pmod2\},\\
  M_t &=\{(\alpha,0,y,\delta)\mid \alpha,y,\delta\in\N,\ \alpha\le y\le(2t+1)\alpha,\ y\equiv\alpha \pmod2\}.
\end{align*}
In particular, every vector of $M$ and of $M_t$ has second coordinate $0$, and every vector of $M_t$ with first coordinate $0$ also has third coordinate $0$.
\end{lemma}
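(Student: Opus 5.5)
We prove each of the two set equalities by double inclusion, writing a general element of the monoid as a non-negative integer combination of its generators and reading off its coordinates.

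\emph{The monoid $M$.} Take $v=\lambda_1X_1+\lambda_2X_2+\lambda_3X_3$ with $\lambda_i\in\N$. Then $v=(\lambda_1,0,\lambda_1+2\lambda_2,\lambda_3)$. So with $\alpha=\lambda_1$, $y=\lambda_1+2\lambda_2$ and $\delta=\lambda_3$ we get $y\ge\alpha$ and $y\equiv\alpha\pmod 2$, which gives ``$\subseteq$''. Conversely, let $(\alpha,0,y,\delta)$ satisfy $y\ge\alpha$ and $y\equiv\alpha\pmod2$. Put $\lambda_1=\alpha$, $\lambda_2=(y-\alpha)/2\in\N$ and $\lambda_3=\delta$; this combination reproduces the vector, which gives ``$\supseteq$''.

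\emph{The monoid $M_t$, inclusion ``$\subseteq$''.} Write $v=\mu X_3+\sum_{j=0}^t\lambda_j(X_1+jX_2)$ with $\mu,\lambda_j\in\N$. Since $X_1+jX_2=(1,0,2j+1,0)$, we have $v=(\alpha,0,y,\mu)$ where $\alpha=\sum_j\lambda_j$ and $y=\sum_j\lambda_j(2j+1)$. Because $1\le 2j+1\le 2t+1$ for every $j\le t$, it follows that $\alpha\le y\le(2t+1)\alpha$. Because each $2j+1$ is odd, $y\equiv\sum_j\lambda_j=\alpha\pmod2$.

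\emph{The monoid $M_t$, inclusion ``$\supseteq$''.} This is the only step needing a small construction. Given $(\alpha,0,y,\delta)$ with $\alpha\le y\le(2t+1)\alpha$ and $y\equiv\alpha\pmod2$, we must write $y$ as a sum of exactly $\alpha$ odd numbers, each in $\{1,3,\dots,2t+1\}$. Set $k=(y-\alpha)/2$, so $0\le k\le t\alpha$. The task becomes writing $k=j_1+\dots+j_\alpha$ with each $j_i\in\{0,\dots,t\}$. Do this greedily: use $\lfloor k/t\rfloor$ summands equal to $t$, one summand equal to the remainder $k \bmod t$, and zeros for the rest. This uses at most $\alpha$ summands since $k\le t\alpha$. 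If $t=0$, then $k=0$ and all $j_i=0$. Then
\[
v=\sum_{i=1}^{\alpha}\bigl(X_1+j_iX_2\bigr)+\delta X_3 .
\]
Its first coordinate is $\alpha$, and its third coordinate is $\sum_i(2j_i+1)=\alpha+2k=y$. Also $\alpha=0$ forces $y=0$, so this case is consistent.

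\emph{The ``in particular'' claims.} Both follow directly from the descriptions. Every element has second coordinate $0$. For $M_t$, $\alpha=0$ forces $0\le y\le 0$, so $y=0$.
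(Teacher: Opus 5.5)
Your proof is correct and follows the paper's argument: you expand a general element in the generators to get each ``$\subseteq$'' direction, and for ``$\supseteq$'' in $M_t$ you supply an explicit greedy split of $k=(y-\alpha)/2$ into $\alpha$ parts from $\{0,\dots,t\}$, which is the step the paper dismisses as ``clearly''. The only nit is that when $k=t\alpha$ the remainder summand is $0$, so it belongs among the zeros rather than counting as an extra $(\alpha+1)$-st summand.
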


\begin{proof}
Any element of $M$ is of the form $\alpha X_1+\beta X_2+\delta X_3=(\alpha,0,\alpha+2\beta,\delta)$ with $\alpha,\beta,\delta\in\N$. Its third coordinate $y=\alpha+2\beta$ satisfies $y\ge\alpha$ and $y\equiv\alpha\pmod2$. Conversely, a vector $(\alpha,0,y,\delta)$ with $y\ge\alpha$ and $y\equiv\alpha\pmod2$ belongs to $M$, as witnessed by $\beta=(y-\alpha)/2$.

Any element of $M_t$ is of the form $\delta X_3$ plus a sum of $\alpha$ vectors $X_1+j_iX_2=(1,0,2j_i+1,0)$ with $0\le j_i\le t$. Its first coordinate is $\alpha$, and its third coordinate $y=\sum_{i=1}^{\alpha}(2j_i+1)$ is a sum of $\alpha$ odd numbers between $1$ and $2t+1$. Hence $\alpha\le y\le(2t+1)\alpha$ and $y\equiv\alpha\pmod 2$. Conversely, let $y$ satisfy $\alpha\le y\le(2t+1)\alpha$ and $y\equiv\alpha\pmod 2$ for $\alpha\in\mathbb{N}$. Clearly, then $(\alpha,0,y,\delta)\in M_t$ for all $\delta\in\mathbb{N}$.
\end{proof}


\medskip

It is straightforward to obtain the following chain.

\medskip

\begin{lemma}\label{lem:len-binary-incl}
$L_0\subseteq L_1\subseteq L_2\subseteq\cdots\subseteq L$.
\end{lemma}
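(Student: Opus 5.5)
The plan is to reduce every inclusion to an inclusion of the underlying monoids. The key observation is monotonicity of $L_s(\cdot)$ in the monoid. By the definition,
\[
  L_s(P)=\bigcup\nolimits_{v\in\langle P\rangle}\Fill_s(v).
\]
Hence if $\langle P\rangle\subseteq\langle Q\rangle$, then $L_s(P)\subseteq L_s(Q)$, since every word filling a vector of $\langle P\rangle$ also fills that same vector, which lies in $\langle Q\rangle$. So it suffices to show $M_0\subseteq M_1\subseteq M_2\subseteq\cdots\subseteq M$, all with the same terminal word $s=aab$.

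The monoid inclusions follow directly at the level of generators. The generating set of $M_t$ is $\{X_3\}\cup\{X_1+jX_2\mid 0\le j\le t\}$, and it is contained in the generating set of $M_{t+1}$. Therefore $M_t\subseteq M_{t+1}$.

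For $M_t\subseteq M$, each generator of $M_t$ already lies in $M$. Indeed, $X_3\in M$, and $X_1+jX_2$ is a nonnegative integer combination of $X_1$ and $X_2$. Since $M$ is closed under addition, it follows that $M_t\subseteq M$.

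Alternatively, one can read both inclusions off the explicit descriptions in Lemma~\ref{lem:len-binary-monoids}. The condition $\alpha\le y\le(2t+1)\alpha$ is weaker than the corresponding condition for $t+1$, and it implies $y\ge\alpha$. Together with the same parity condition, this places each vector of $M_t$ in $M_{t+1}$ and in $M$.

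There is no real obstacle here. The only point requiring care is stating the monotonicity of $L_s$ in the monoid explicitly, since the statement concerns languages rather than monoids. Strictness of the inclusions is not claimed in this lemma, so nothing about distinguishing words is needed.
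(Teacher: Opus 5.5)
Your proof is correct and matches the paper's: inclusion of generating sets gives $M_0\subseteq M_1\subseteq\cdots\subseteq M$, and monotonicity of $L_{aab}(\cdot)$ in the monoid transfers this to the languages, which the paper leaves implicit and you state explicitly. One slip in your alternative remark: the condition $\alpha\le y\le(2t+1)\alpha$ is \emph{stronger}, not weaker, than the condition for $t+1$, and that is exactly why it implies membership in $M_{t+1}$.
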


\begin{proof}
Every generator of $M_t$ is a generator of $M_{t+1}$, and $X_1+jX_2\in M$. Hence $M_0\subseteq M_1\subseteq\cdots\subseteq M$, from which the claim follows immediately.
\end{proof}

Next, we show that every word in $L$ is contained in some language $L_t$, $t\in\mathbb{N}$, so that, together with Lemma~\ref{lem:len-binary-incl}, we obtain that the union of all languages $L_t$ in the chain equals $L$.

\medskip

\begin{lemma}\label{lem:len-binary-union}
Let $w\in L$ have a split with vector $(\alpha,0,\alpha+2\beta,\delta)\in M$. Then $w\in L_\beta$. Consequently, $\bigcup_{t\in\N}L_t=L$.
\end{lemma}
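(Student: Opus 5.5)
The plan is to show directly that the given split vector already lies in $M_\beta$, and then to combine this with Lemma~\ref{lem:len-binary-incl}. Since $w$ is a filling of $v=(\alpha,0,\alpha+2\beta,\delta)$ with respect to $aab$, that is, $w\in\Fill_{aab}(v)$, it suffices to show $v\in M_\beta$. Then $D_{aab}(w)\cap M_\beta\neq\emptyset$, and so $w\in L_\beta$ by the definition of $L_{aab}(\cdot)$.

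To see that $v\in M_\beta$, use the description of $M_t$ in Lemma~\ref{lem:len-binary-monoids} with $t=\beta$. The parity condition holds because $\alpha+2\beta\equiv\alpha\pmod 2$. The lower bound $\alpha\le\alpha+2\beta$ is trivial. The upper bound $\alpha+2\beta\le(2\beta+1)\alpha$ is equivalent to $2\beta\le 2\beta\alpha$, which holds whenever $\alpha\ge1$.

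The only case that needs care is $\alpha=0$ with $\beta>0$, and I expect this to be the main obstacle. Here $v=(0,0,2\beta,\delta)$, and by Lemma~\ref{lem:len-binary-monoids} no vector of $M_t$ with first coordinate $0$ has a nonzero third coordinate. So $v\notin M_\beta$, and we must use a different split of the same word. This is exactly where the binary alphabet and the shifting of the adjacent terminals $aa$ come in. Write $w=a\,a\,u\,b\,z$ with $|u|=2\beta$ and $|z|=\delta$; the question is whether $w$ has another split whose vector lies in some $M_t$. The first $a$ can be shifted into the first gap only if the split begins $a\cdot a\cdot a\cdots$, i.e.\ only if $u$ starts with $a$. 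Otherwise $w=aab\cdots$ with $u$ beginning with $b$, and the first gap must be empty in any split whose vector has second coordinate $0$.

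For example, $aabbb\in L$ via the vector $(0,0,2,0)$. Its only splits with second coordinate $0$ are $(0,0,2,0)$ and $(0,0,0,2)$, and the latter lies in $M_0$ via $2X_3$. So I would first look for a split that moves mass into the last gap. Take $b$ to be the last $b$ of $w$ (this requires $z$ to contain a $b$), or more generally use a split with vector $(0,0,0,2\beta+\delta)$ whenever the letter following the prefix $aa$ is $b$. The vector $(0,0,0,k)=kX_3$ always lies in $M_0$. In general, I would perform a case analysis on the first letter of $u$:
\begin{itemize}
\item If $u$ starts with $b$, re-split with the terminal $b$ at position~$3$, giving vector $(0,0,0,|w|-3)\in M_0\subseteq M_\beta$.
\item If $u$ starts with $a$, shift to the split $w=\varepsilon\cdot a\cdot a\cdots$ with first gap of length $1$. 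The resulting vector $(1,0,2\beta-1,\delta)$ has the correct parity and satisfies $1\le 2\beta-1\le 2\beta+1$, so it lies in $M_\beta$.
\end{itemize}
With these cases settled, $w\in L_\beta$ holds in all cases.

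For the consequence, take $w\in L$. Then $w$ has a split with a vector in $M$, which by Lemma~\ref{lem:len-binary-monoids} has the form $(\alpha,0,\alpha+2\beta,\delta)$. The first part gives $w\in L_\beta$, so $L\subseteq\bigcup_t L_t$. The reverse inclusion is Lemma~\ref{lem:len-binary-incl}.
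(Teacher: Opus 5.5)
Your proof is correct and follows essentially the paper's route. The only deviation is that for $\alpha\ge1$ you check membership via the inequality description of $M_\beta$ from Lemma~\ref{lem:len-binary-monoids} rather than the explicit decomposition $(\alpha-1)X_1+(X_1+\beta X_2)+\delta X_3$; your handling of $\alpha=0<\beta$ (take the first three letters $aab$ as terminals if $u$ starts with $b$, and shift $aa$ one position right to get $(1,0,2\beta-1,\delta)$ if $u$ starts with $a$) is exactly the paper's.

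Delete the exploratory asides, which are unused and partly wrong. Choosing the last $b$ of $w$ moves length out of the last gap, not into it. The claim that the first gap must be empty when $u$ starts with $b$ fails; for example, $aabaabb$ has a split with vector $(3,0,0,1)$. Finally, write the shifted split as $w=a\cdot a\cdot\varepsilon\cdot a\cdot u'\cdot b\cdot z$ with $u=au'$.
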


\begin{proof}
Let $e=(\alpha,0,\alpha+2\beta,\delta)$. We find a split of $w$ whose vector lies in $M_\beta$, distinguishing three cases.

\emph{Case $\alpha\ge1$.} Then
\[
  e=(\alpha-1)\,X_1+(X_1+\beta X_2)+\delta X_3,
\]
since $(\alpha-1)(1,0,1,0)+(1,0,1+2\beta,0)+(0,0,0,\delta)=(\alpha,0,\alpha+2\beta,\delta)$. All three summands belong to $M_\beta$, so the given split already works.

\emph{Case $\alpha=0$ and $\beta=0$.} Then $e=\delta X_3\in M_0\subseteq M_\beta$. 

\emph{Case $\alpha=0$ and $\beta\ge1$.} The given split has the form $w=aa\,u\,b\,v$ with $|u|=2\beta\ge2$ and $|v|=\delta$. It uses the variable $x_2$ only in the middle variable block, corresponding to a vector $(0,0,2\beta,\delta)$ which is not in $M_\beta$, so we split $w$ differently, depending on the first letter of $u$.
\begin{itemize}
  \item If $u$ begins with $b$, then $w$ begins with $aab$. We use these three letters as the terminals. The resulting split has vector $(0,0,0,|w|-3)=(0,0,0,2\beta+\delta)=(2\beta+\delta)X_3\in M_0\subseteq M_\beta$. 
  \item If $u$ begins with $a$, say $u=a\,u'$, then $w=a\,aa\,u'\,b\,v$. We use the first letter of $w$ as the content of the first gap, its second and third letters as the two terminals $a$, and the same $b$ as before. The second gap is empty, the third gap is $u'$ of length $2\beta-1$, and the fourth gap is $v$. The resulting split has vector
  \[
    (1,0,2\beta-1,\delta)=\bigl(X_1+(\beta-1)X_2\bigr)+\delta X_3\in M_{\beta-1}.
  \]
\end{itemize}
In every case, $w\in L_\beta$, using $L_{\beta-1}\subseteq L_\beta$ in the last case. Since every word of $L$ has a split with vector in $M$, every word of $L$ belongs to some $L_t$, and together with Lemma~\ref{lem:len-binary-incl} we obtain $\bigcup_tL_t=L$.
\end{proof}

\begin{example}
The word $aabaabb=aa\cdot baab\cdot b$ has the split vector $(0,0,4,0)=2X_2$ in $M$. Since $baab$ begins with $b$, we may use its first occurrence of $aab$ as the terminals, which gives the vector $(0,0,0,4)=4X_3\in M_0$; so this word belongs to $L_0$. The word 
$aaababb=aa\cdot abab\cdot b$ also has the split vector $2X_2$. Since $abab$ begins with $a$, we shift the terminals $aa$ one position to the right, obtaining $a\cdot aa\cdot bab\cdot b$
with vector $(1,0,3,0)=X_1+X_2\in M_1$; so this word belongs to $L_1$. (Lemma~\ref{lem:len-binary-strict} below will show that this word does not belong to $L_0$.)
\end{example}

\medskip

Finally, we demonstrate that the chain is strict.

\medskip

\begin{lemma}\label{lem:len-binary-strict}
For every $t\ge0$, the word
\[
  w_t=aaab\,(ab)^{t+1}\,b\qquad\text{of length }2t+7
\]
belongs to $L_{t+1}\setminus L_t$. Consequently, $L_t\subsetneq L_{t+1}$ and $L_t\subsetneq L$.
\end{lemma}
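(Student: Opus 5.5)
The plan is to verify the two membership claims directly from the explicit descriptions of $M_{t+1}$ and $M_t$ in Lemma~\ref{lem:len-binary-monoids}, and then to combine them with Lemma~\ref{lem:len-binary-incl} to get the strict inclusions. The length count is immediate: $4+2(t+1)+1=2t+7$.

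\emph{Membership in $L_{t+1}$.} I would exhibit one split. Take the first letter of $w_t$ as the first gap, and letters $2$ and $3$ as the two terminals $a$. Take the final letter $b$ as the terminal $b$, so the third gap is $b(ab)^{t+1}$ and the last gap is empty. This split has vector $(1,0,2t+3,0)=X_1+(t+1)X_2$, which is a generator of $M_{t+1}$. Hence $w_t\in L_{t+1}$.

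\emph{Non-membership in $L_t$.} This is the only step needing care. Suppose some split of $w_t$ has vector $(\alpha,0,y,\delta)\in M_t$. By Lemma~\ref{lem:len-binary-monoids} the second coordinate is $0$, so the two terminal $a$'s occupy adjacent positions $\alpha+1,\alpha+2$ of $w_t$. In $w_t=aaab(ab)^{t+1}b$ the only occurrences of $aa$ start at positions $1$ and $2$, since every later $a$ is followed by $b$. So $\alpha\in\{0,1\}$, and I would treat the two cases separately.
\begin{itemize}
\item If $\alpha=0$, Lemma~\ref{lem:len-binary-monoids} forces $y=0$. The terminal $b$ would then have to be at position $3$, which holds an $a$. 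This is a contradiction.
\item If $\alpha=1$, the constraints are $1\le y\le 2t+1$ and $y$ odd. The terminal $b$ sits at some position $q\ge4$ with $y=q-4$. The $b$'s after position $3$ are at the even positions $4,6,\dots,2t+6$ and at position $2t+7$. Position $4$ gives $y=0<1$. The other even positions give even $y$, violating parity. Position $2t+7$ gives $y=2t+3>2t+1$. Every choice fails.
\end{itemize}
Hence $w_t\notin L_t$.

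\emph{Consequences.} By Lemma~\ref{lem:len-binary-incl}, $L_t\subseteq L_{t+1}\subseteq L$. Together with $w_t\in L_{t+1}\setminus L_t$, this gives $L_t\subsetneq L_{t+1}$ and therefore $L_t\subsetneq L$.
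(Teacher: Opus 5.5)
Your proof is correct and takes essentially the same route as the paper. You use Lemma~\ref{lem:len-binary-monoids} to force the terminals $aa$ onto one of the two adjacent-$a$ placements, rule out every possible position of the terminal $b$ by the range and parity constraints of $M_t$, and use the split with vector $(1,0,2t+3,0)=X_1+(t+1)X_2$ to show $w_t\in L_{t+1}$; the only differences from the paper are 1-based indexing and checking that last membership directly via the generator instead of citing Lemma~\ref{lem:len-binary-union}.
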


\begin{proof}
We number the positions of $w_t$ from $0$. The letter $a$ occurs at the positions $0,1,2$ and at the even positions $4,6,\dots,2t+4$; the letter $b$ occurs at position $3$, at the odd positions $5,7,\dots,2t+5$, and at position $2t+6$. For instance, $w_0=aaababb$.

By Lemma~\ref{lem:len-binary-monoids}, a split whose vector lies in $M_{t+1}$ or in $M_t$ uses two adjacent occurrences of $a$ as the terminals $a,a$. After position $3$ in $w_t$, the letters $a$ and $b$ alternate, so the only adjacent pairs of $a$'s are at the position pairs $(0,1)$ and $(1,2)$. Let $r$ be the position used for the terminal $b$. A split using the pair $(q,q+1)$ and $r$ has the vector $(q,\,0,\,r-q-2,\,2t+6-r)$. Table~\ref{tab:splits} lists all possibilities.

\begin{table}[t]
\begin{tabular}{|l|l|l|p{0.36\textwidth}|}
\hline
pair of $a$'s & position $r$ of $b$ & vector & membership\\
\hline
$(0,1)$ & any $r\ge3$ & $(0,0,r-2,\cdot)$ & not in $M_t$: first coordinate $0$, third coordinate at least $1$\\
$(1,2)$ & $r=3$ & $(1,0,0,\cdot)$ & neither in $M$ nor in $M_t$: third coordinate smaller than the first\\
$(1,2)$ & $r$ odd, $5\le r\le 2t+5$ & $(1,0,r-3,\cdot)$ & neither in $M$ nor in $M_t$: wrong parity\\
$(1,2)$ & $r=2t+6$ & $(1,0,2t+3,0)$ & in $M_{t+1}$, as $X_1+(t+1)X_2$; not in $M_t$, as $2t+3>2t+1$\\
\hline
\end{tabular}
\caption{Possible combinations for positioning terminal symbols in $w_t$, and resulting membership of the split vector in $M_{t+1}$ and $M_t$.}\label{tab:splits}
\end{table}

The last row shows $w_t\in L$, and no row yields a vector of $M_t$, so $w_t\notin L_t$. By Lemma~\ref{lem:len-binary-union}, applied to the split of the last row with $\beta=t+1$, we have $w_t\in L_{t+1}$. Hence $L_t\neq L_{t+1}$, and with Lemma~\ref{lem:len-binary-incl} the chain is strict.
\end{proof}

For example, for $t=0$ we obtain the following: $w_0=aaababb$ has its adjacent $a$'s at position pairs $(0,1)$ and $(1,2)$ and its $b$'s at positions $3,5,6$. The pair $(0,1)$ gives the vectors $(0,0,1,3)$, $(0,0,3,1)$ and $(0,0,4,0)$, none of which lies in $M_0$. The pair $(1,2)$ gives $(1,0,0,3)$, $(1,0,2,1)$ and $(1,0,3,0)$; the first two are not even in $M$, and the third is $X_1+X_2\in M_1\setminus M_0$.

\medskip

We finally put everything together to obtain our main result.

\medskip

\begin{theorem}\label{thm:len-no-telltale}
Let $\Sigma=\{a,b\}$. Then the language $L=\varepsilon L_{\mathrm{len}}(x_1\,a\,a\,x_1\,x_2x_2\,b\,x_3)$ has no telltale in $\eLlen$. In particular, the class $\eLlen$ is not learnable in the limit from positive data, not even by a non-computable learner.
\end{theorem}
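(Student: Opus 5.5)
The plan is to combine the three preceding lemmas with Theorem~\ref{thm:len-angluin}. The first step is to check that $L$ and every $L_t$ lie in $\eLlen$. For $L$ this holds by definition. For each $L_t$ it follows from the converse part of Observation~\ref{lem:len-normal}: the generating set $\{X_3\}\cup\{X_1+jX_2\mid 0\le j\le t\}$ is a finite set of non-zero vectors in $\N^4$, so some pattern $p_t$ satisfies $\varepsilon L_{\mathrm{len}}(p_t)=L_t$. Concretely, take $y_1\cdots y_{t+1}\,a\,a\,y_1\,y_2^3\cdots y_{t+1}^{2t+1}\,b\,y_{t+2}$, with the variables renamed to respect the pattern convention.

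Next, let $T\subseteq L$ be an arbitrary finite set. We show it is not a telltale. By Lemma~\ref{lem:len-binary-union}, each $w\in T$ lies in some $L_{t_w}$. Put $t=\max_{w\in T}t_w$, or $t=0$ if $T=\emptyset$. Lemma~\ref{lem:len-binary-incl} then gives $T\subseteq L_t$. Lemma~\ref{lem:len-binary-strict} gives $L_t\subsetneq L$; for instance, $w_t\in L\setminus L_t$. Hence $L_t\in\eLlen$ satisfies $T\subseteq L_t\subsetneq L$, so $T$ is not a telltale. Since $T$ was arbitrary, $L$ has no telltale in $\eLlen$.

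For the learnability consequence, note that $\eLlen$ is an indexed family: membership is uniformly decidable, since one can enumerate all splits of $w$ and test each vector for membership in the finitely generated monoid. Theorem~\ref{thm:len-angluin} applies to arbitrary, even non-computable, learners. So if some learner identified $\eLlen$ in the limit from positive data, every member would have a telltale, which contradicts what we just showed.

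There is no real obstacle, since all the combinatorial work is in the lemmas. The only points needing care are that the languages $L_t$ genuinely belong to the class, which the converse of Observation~\ref{lem:len-normal} handles, and that the finite set $T$ is covered by a single $L_t$, which the monotonicity of the chain handles.
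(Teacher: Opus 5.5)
Your proposal is correct and follows the paper's proof: you cover the finite set $T$ by a single $L_t$ via Lemma~\ref{lem:len-binary-union} and Lemma~\ref{lem:len-binary-incl}, use Lemma~\ref{lem:len-binary-strict} to get $L_t\subsetneq L$, and conclude with Theorem~\ref{thm:len-angluin}. Your extra checks are sound details that the paper leaves implicit: that each $L_t$ lies in $\eLlen$ via the explicit pattern, the case $T=\emptyset$, and uniform decidability of membership.
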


\begin{proof}
Let $T\subseteq L$ be finite. By Lemma~\ref{lem:len-binary-union}, every word of $T$ belongs to some $L_t$; let $t$ be the largest index needed. By Lemma~\ref{lem:len-binary-incl}, $T\subseteq L_t$, and by Lemma~\ref{lem:len-binary-strict}, $L_t\subsetneq L$. Since $L_t\in\eLlen$, the set $T$ is not a telltale of $L$.
\end{proof}

The binary alphabet is used at exactly one place, in the last case of the proof of Lemma~\ref{lem:len-binary-union}: the first letter of $u$ is either $b$, in which case it can serve as the terminal $b$, or $a$, in which case the pair of terminals $aa$ can be shifted one position to the right. Over a ternary alphabet $\{a,b,c\}$, a word $aa\,c\,u'\,b\,v$ admits neither alternative. 

The shift needs two equal adjacent terminals, forming an empty gap, followed by a gap between two different terminals. The large class of patterns studied in \cite{MousawiZ25}, for which telltales exist over binary alphabets, has no such configuration. This raises the following question.

\emph{Given $\Sigma=\{a,b\}$, is the subclass of $\eLlen$ consisting of the languages whose terminal word does not contain $aa$ or $bb$ as a factor learnable in the limit from positive data?
}

We leave this question open.

Observe that the language $L$ from Theorem~\ref{thm:len-no-telltale}, as well as all the languages $L_t$ in the chain in Lemma~\ref{lem:len-binary-incl}, are generated by patterns with exactly three variable blocks. We therefore obtain the following corollary.

\medskip

\begin{corollary}\label{cor:threeblocks}
    Let $|\Sigma|=2$. Then the subclass of $\eLlen$ consisting of languages generated by patterns with exactly three variable blocks is not learnable in the limit from positive data.
\end{corollary}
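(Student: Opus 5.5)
The corollary follows by rerunning the telltale argument of Theorem~\ref{thm:len-no-telltale} inside the smaller class. Let $\mathcal{C}\subseteq\eLlen$ be the subclass of languages generated by patterns with exactly three variable blocks. Telltales are defined relative to a class, so the point is to check that both the target $L$ and the whole chain $(L_t)_{t\in\N}$ lie in $\mathcal{C}$. Then the proof of Theorem~\ref{thm:len-no-telltale} shows, word for word, that $L$ has no telltale in $\mathcal{C}$. Theorem~\ref{thm:len-angluin} then gives non-learnability.

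\emph{Step 1: membership in $\mathcal{C}$.}
\begin{itemize}
\item The target pattern $x_1\,a\,a\,x_1\,x_2x_2\,b\,x_3$ has gaps $\pi_0=x_1$, $\pi_1=\varepsilon$, $\pi_2=x_1x_2x_2$ and $\pi_3=x_3$. These are exactly three non-empty gaps.
\item For $L_t$, we write down an explicit pattern. Take variables $y_0,\dots,y_t$ for the generators $X_1+jX_2=(1,0,2j+1,0)$, and one further variable $z$ for $X_3$. The pattern is
\[
 p_t=y_0y_1\cdots y_t\,a\,a\,y_0\,y_1^{3}\cdots y_t^{2t+1}\,b\,z ,
\]
where $y_j^{k}$ means $k$ consecutive copies of $y_j$. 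The variables are renamed to $x_1,x_2,\dots$ in order of first occurrence, so that $p_t$ is a legal pattern.
\item By Observation~\ref{lem:len-normal}, $\varepsilon L_{\mathrm{len}}(p_t)=L_{aab}(\{X_3\}\cup\{X_1+jX_2\mid j\le t\})=L_t$.
\item The gaps of $p_t$ are $y_0\cdots y_t$, then $\varepsilon$, then the middle block, then $z$. So $p_t$ also has exactly three variable blocks, and $L_t\in\mathcal{C}$.
\end{itemize}

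\emph{Step 2: no telltale in $\mathcal{C}$.}
\begin{itemize}
\item Let $T\subseteq L$ be finite.
\item By Lemma~\ref{lem:len-binary-union}, each word of $T$ lies in some $L_t$. Let $t$ be the largest index needed; then Lemma~\ref{lem:len-binary-incl} gives $T\subseteq L_t$.
\item By Lemma~\ref{lem:len-binary-strict}, $L_t\subsetneq L$.
\item Since $L_t\in\mathcal{C}$, the set $T$ is not a telltale of $L$ in $\mathcal{C}$.
\end{itemize}

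\emph{Step 3: conclude.} The class $\mathcal{C}$ is an indexed family: membership is uniformly decidable, and one can effectively enumerate the patterns with exactly three variable blocks. So Theorem~\ref{thm:len-angluin} applies. Since $L\in\mathcal{C}$ has no telltale in $\mathcal{C}$, no learner, computable or not, learns $\mathcal{C}$ in the limit from positive data.

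The only step needing care is Step 1. Membership in $\mathcal{C}$ is a property of some generating pattern, and here it holds because we exhibit the patterns $p_t$ explicitly. Everything else is inherited unchanged from Section~\ref{sec:len-binary}.
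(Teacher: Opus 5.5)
Your proposal is correct and follows the paper's argument: the paper obtains the corollary by noting that the target pattern $x_1\,a\,a\,x_1\,x_2x_2\,b\,x_3$ and suitable patterns for every $L_t$ have exactly three variable blocks, so the chain argument of Theorem~\ref{thm:len-no-telltale} runs unchanged inside the subclass. Your explicit patterns $p_t$ are exactly the ones the paper describes (one variable for each $j\le t$, occurring once before $aa$ and $2j+1$ times before $b$, plus a trailing variable), and you correctly note that membership in the subclass only requires exhibiting one such generating pattern.
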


\medskip

This immediately raises the question whether patterns with up to two variable blocks are learnable, for binary alphabets. The next section answers this question affirmatively.

\section{Learning of Patterns with at Most Two Variable Blocks}\label{sec:twoblocks}

Let $\eLlen^{2}$ be the class of the languages $\varepsilon L_{\mathrm{len}}(p)$ of patterns of the form \begin{equation}\label{eqn:patternform}p=\omega_0\,\vec{x}_1\,\omega_1\,\vec{x}_2\,\omega_2\end{equation} with $\omega_0,\omega_1,\omega_2\in\Sigma^\ast$, $\vec{x}_1,\vec{x}_2\in X^\ast$, and $p\ne\varepsilon$, i.e., patterns with at most two variable blocks. We require $\omega_1\ne\varepsilon$ when both $\vec{x}_1\ne\varepsilon$ and $\vec{x}_2\ne\varepsilon$. Moreover, when $p$ has exactly one variable block, we take $\vec{x}_2=\omega_2=\varepsilon$. This section establishes the following result, which stands in contrast with Corollary~\ref{cor:threeblocks}.

\medskip

\begin{theorem}\label{thm:twoblocks}
Let $|\Sigma|\ge 2$. Then $\eLlen^{2}$ is learnable in the limit from positive data.
\end{theorem}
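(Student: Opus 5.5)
The plan is to use the sufficiency direction of Angluin's characterization. Membership in $\eLlen^{2}$ is uniformly decidable, so it suffices to give an effective procedure that, from a pattern $p$ of the form~\eqref{eqn:patternform}, computes a finite telltale $T(p)$ of $\varepsilon L_{\mathrm{len}}(p)$ with respect to $\eLlen^{2}$.

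\textbf{Reduction to a two-dimensional monoid.} By Observation~\ref{lem:len-normal}, write $\varepsilon L_{\mathrm{len}}(p)=L_s(P)$ with $s=\omega_0\omega_1\omega_2$. Every gap vector is supported on at most two coordinates, namely the gap $i$ after $\omega_0$ and the gap $j$ after $\omega_0\omega_1$. Hence the language is determined by the triple $(\omega_0,\omega_1,\omega_2)$ together with a finitely generated submonoid $M\subseteq\N^2$. A word lies in the language iff it has the form $\omega_0u\omega_1v\omega_2$ with $(|u|,|v|)\in M$.

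\textbf{The candidate telltale.} Let $T(p)$ consist of $s$ itself together with the finite sets $\Fill_s(g)$ for all $g\in\mg(M)$, and also $\Fill_s(g+h)$ for pairs $g,h$ of minimal generators. These fillings are finite because each vector has only finitely many fillings. Suppose $L'=L_{s'}(P')\in\eLlen^{2}$ satisfies $T(p)\subseteq L'\subseteq L$, with data $(\omega_0',\omega_1',\omega_2',M')$. The argument then has three steps.
\begin{enumerate}
\item Since $s$ is the unique shortest word of $L$ and $s\in L'\subseteq L$, we get $s'=s$.
\item Containment $L'\subseteq L$ should force the block positions of $L'$ to coincide with those of $L$, or else $L'$ degenerates to a single-block or trivial case. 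To see this, take long fillings such as $a^N$ and $b^N$ of a nonzero vector of $M'$ and read off where inserted material must appear in words of $L$, using that the prefix $\omega_0$ and the suffix $\omega_2$ are forced.
\item Show that $g\in M'$ for each $g\in\mg(M)$. Then $M\subseteq M'$, so $L\subseteq L'$ and therefore $L'=L$.
\end{enumerate}

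\textbf{The main obstacle: step 3.} Each filling $w\in\Fill_s(g)$ has some split with vector in $M'$, but that split may be shifted. Its vector has the same size $|g|$ and differs from $g$ by moving letters across the boundaries of $\omega_1$. With two gaps, every such alternative vector lies on the finite segment $\{(g_0-d,g_1+d)\}$ of vectors with size $|g|$, constrained additionally by $M'\subseteq M$. I would try to show that for every shift $d\neq0$ some filling of $g$ admits no split with shift $d$. To construct it, choose the letters of $u$ and $v$ adjacent to $\omega_1$ to differ from the corresponding boundary letters of $\omega_1$; if $\omega_1$ is periodic, use the first position where the shifted copy of $\omega_1$ would mismatch. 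A single filling need not exclude every $d$ simultaneously, since one filling could be explained by shift $d_1$ and another by shift $d_2$. The case analysis must then use that $M'$ is a monoid contained in $M$: if both $(g_0-d_1,g_1+d_1)$ and $(g_0-d_2,g_1+d_2)$ lie in $M'\subseteq M$, minimality of $g$ together with the fillings of the sums $g+h$ included in $T(p)$ should yield a contradiction, or show that $g$ itself is in $M'$. Here the restriction to two blocks is essential. In Section~\ref{sec:len-binary} the imitation needed a third, independent coordinate ($X_3$) to absorb the remaining length, and with only two gaps there is no such free coordinate.

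Once step 3 is established, $T(p)$ is computable from $p$, because $\mg(M)$ and the finite filling sets are computable. Angluin's characterization then gives learnability of $\eLlen^{2}$ for every $|\Sigma|\ge2$.
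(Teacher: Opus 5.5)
\textbf{Verdict: there is a genuine gap.} Your candidate set $T(p)$ is a valid telltale, but your proof does not establish this: step~2 rests on a false claim and step~3 is left unproved. $T(p)$ contains, for each $g\in\mg(M)$, the filling $\omega_0\sigma_1^{g_0}\omega_1\sigma_2^{g_1}\omega_2$ with $\sigma_1\neq\omega_1[1]$ and $\sigma_2\neq\omega_1[|\omega_1|]$. It is therefore a superset of a valid telltale, and the fillings of sums $g+h$ are superfluous.

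\textbf{Step 2.} You claim that containment $L'\subseteq L$ alone pins the block positions of $L'$ to those of $L$, up to degenerate cases. This is false. Over $\{a,b\}$, take $p=x_1\,ab\,x_2$ and $p'=a\,x_1\,b\,x_2$. Both have two blocks and terminal word $ab$. Moreover, $\varepsilon L_{\mathrm{len}}(p')=a\Sigma^*b\Sigma^*\subsetneq\Sigma^*ab\Sigma^*=\varepsilon L_{\mathrm{len}}(p)$, because a binary word that starts with $a$ and contains $b$ contains $ab$. Yet the first block sits at a different position. Containment only yields the prefix $\omega_0$, the suffix $\omega_2$, and bounds such as $|\omega_0|\le|\omega'_0|\le|\omega_0\omega_1|$. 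To force $|\omega'_0|=|\omega_0|$ you must use a word of $T$ whose letter right after $\omega_0$ differs from $\omega_1[1]$; here $bab\in T\setminus L'$. The right end needs the symmetric argument. The degenerate case you allow, e.g.\ $p'=a\,x_1\,b$, is likewise only excluded by words of $T$, and you never exclude it.

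\textbf{Step 3.} Here the key idea is missing. Your remark that a single filling need not exclude every shift simultaneously is true for arbitrary fillings, but false for the right filling. With constant fillings, the word $\sigma_1^{g_0}\omega_1\sigma_2^{g_1}$ contains $\omega_1$ exactly once:
\begin{itemize}
\item an occurrence starting earlier would begin with $\sigma_1\neq\omega_1[1]$;
\item an occurrence starting later would end with $\sigma_2\neq\omega_1[|\omega_1|]$.
\end{itemize}
So once the prefix $\omega_0$, the suffix $\omega_2$ and the block positions of $p'$ are fixed, the word $\omega_0\sigma_1^{g_0}\omega_1\sigma_2^{g_1}\omega_2$ has exactly one split, with vector $g$. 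It rules out every shift $d\neq0$ at once, so $g\in M'$ follows directly, and periodicity of $\omega_1$ plays no role. Your proposed workaround, via two shifts $d_1,d_2$, minimality of $g$ and fillings of $g+h$, is only a hope (``should yield a contradiction''). Nothing in it is established, so as written step~3 remains unproved.
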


\medskip

The remainder of this section is dedicated to proving Theorem~\ref{thm:twoblocks}.

Following Observation~\ref{lem:len-normal}, we identify the erasing equal-length relational language of a pattern $p$ of the form~(\ref{eqn:patternform}) with the monoid $M=\langle P\rangle$ generated by the set $P$ of gap vectors of the variables in $p$. In particular, 
\begin{equation}\label{eqn:M}
  \varepsilon L_{\mathrm{len}}(p)=\{\omega_0\,u\,\omega_1\,v\,\omega_2\mid u,v\in\Sigma^\ast,\ (|u|,|v|)\in M\}.
\end{equation}
Note that, by contrast with the previous section, within this section, $M$ is a submonoid of $\mathbb{N}^2$.

We begin with two straightforward lemmas.

\medskip

\begin{lemma}\label{lem:len-unique-occ}
Let $\omega_1\in\Sigma^+$ and let $\sigma_1, \sigma_2 \in\Sigma$ such that $\sigma_1 \neq \omega_1 [1]$ and $\sigma_2 \neq \omega_1 [|\omega_1|]$. For all $k_1,k_2 \in\N$, the word $w = \sigma_1^{k_1}\,\omega_1\,\sigma_2^{k_2}$ contains exactly one occurrence of the substring $\omega_1$, i.e., $k_1+1$ is the only integer  such that \( w[k_1+1: k_1+|\omega_1|] = \omega_1 \).
\end{lemma}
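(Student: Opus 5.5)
The plan is to show directly that every other starting position for an occurrence of $\omega_1$ in $w=\sigma_1^{k_1}\omega_1\sigma_2^{k_2}$ fails. Write $n=|\omega_1|$. Clearly $w[k_1+1:k_1+n]=\omega_1$ by construction. Suppose $w[i:i+n-1]=\omega_1$ for some $i$ with $1\le i\le k_1+k_2+1$. We split into the two cases $i<k_1+1$ and $i>k_1+1$ and derive a contradiction in each.

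\emph{Case $i\le k_1$.} Then $w[i]=\sigma_1$, since positions $1,\dots,k_1$ of $w$ carry $\sigma_1$. But the occurrence requires $w[i]=\omega_1[1]$, so $\sigma_1=\omega_1[1]$, contradicting the hypothesis $\sigma_1\ne\omega_1[1]$.

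\emph{Case $i\ge k_1+2$.} The last position of the occurrence is $i+n-1\ge k_1+n+1$, which lies in the suffix $\sigma_2^{k_2}$ (positions $k_1+n+1,\dots,k_1+n+k_2$). Hence $w[i+n-1]=\sigma_2$. The occurrence forces $w[i+n-1]=\omega_1[n]$, so $\sigma_2=\omega_1[|\omega_1|]$, again a contradiction. Note this case can only arise when $k_2\ge1$, since otherwise the window would run past the end of $w$.

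There is no real obstacle. The only point needing care is the index bookkeeping: checking that in each case the relevant end of the window really falls inside the $\sigma_1$-prefix or the $\sigma_2$-suffix. Comparing only the first letter in the left case and only the last letter in the right case avoids any reasoning about self-overlaps of $\omega_1$.
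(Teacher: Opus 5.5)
Your proof is correct and follows the paper's argument: the paper rules out starting positions left of $k_1+1$ by comparing the first letter of the window with $\sigma_1$, and positions right of $k_1+1$ by comparing its last letter with $\sigma_2$. Your version also spells out the index check that the window's last position then falls in the $\sigma_2^{k_2}$ suffix, which the paper leaves implicit.
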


\begin{proof}
    Let $d$ be a positive integer such that \( w[d: d+|\omega_1|-1] = \omega_1 \). Since $\sigma_1 \neq \omega_1[1]$, we have $d \geq k_1+1$. Since $\omega_1$ does not end with $\sigma_2$, we also have $d \leq k_1+1$. 
\end{proof}

\begin{lemma}\label{lem:len-twoblock-witness}
Let $p$ be as in~(\ref{eqn:patternform}), let $M$ be as in~(\ref{eqn:M}). 
If $\omega_1\ne\varepsilon$, choose letters $\sigma_1 \neq \omega_1[1]$ and $\sigma_2 \neq\omega_1[|\omega_1|]$; otherwise choose $\sigma_1,\sigma_2\in\Sigma$ arbitrarily.
Then $\omega_0\,\sigma_1^{k_1}\,\omega_1\, \sigma_2^{k_2}\,\omega_2\in \varepsilon L_{len}(p)$ if and only if $(k_1,k_2)\in M$.
\end{lemma}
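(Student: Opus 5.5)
The plan is to prove the two directions separately, using the description~(\ref{eqn:M}) of $\varepsilon L_{\mathrm{len}}(p)$ throughout. Write $w=\omega_0\,\sigma_1^{k_1}\,\omega_1\,\sigma_2^{k_2}\,\omega_2$.

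\emph{If direction.} Suppose $(k_1,k_2)\in M$. Take $u=\sigma_1^{k_1}$ and $v=\sigma_2^{k_2}$. Then $(|u|,|v|)=(k_1,k_2)\in M$, so $w\in\varepsilon L_{\mathrm{len}}(p)$ directly by~(\ref{eqn:M}).

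\emph{Only-if direction.} Suppose $w\in\varepsilon L_{\mathrm{len}}(p)$. By~(\ref{eqn:M}) there are $u,v$ with $w=\omega_0\,u\,\omega_1\,v\,\omega_2$ and $(|u|,|v|)\in M$. Comparing prefixes and suffixes of the same word, which has fixed outer factors $\omega_0$ and $\omega_2$, we strip them off. This gives
\[
\sigma_1^{k_1}\,\omega_1\,\sigma_2^{k_2}=u\,\omega_1\,v
\]
and also $|u|+|v|=k_1+k_2$.

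\emph{Case $\omega_1\neq\varepsilon$.} The right-hand side exhibits an occurrence of $\omega_1$ starting at position $|u|+1$. By Lemma~\ref{lem:len-unique-occ}, the unique occurrence starts at position $k_1+1$. Hence $|u|=k_1$, and then $|v|=k_2$, so $(k_1,k_2)\in M$.

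\emph{Case $\omega_1=\varepsilon$.} By the convention on the form~(\ref{eqn:patternform}), at most one of $\vec{x}_1,\vec{x}_2$ is non-empty. If $p$ has one block, then $\vec{x}_2=\varepsilon$, so every vector in $M$ has second coordinate $0$; if $p$ has no variables, $M=\{(0,0)\}$. So $M\subseteq\N\times\{0\}$.

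\begin{itemize}
\item If $k_2=0$: the only-if direction gives $(|u|,0)\in M$ with $|u|=k_1$, so $(k_1,k_2)\in M$.
\item If $k_2>0$: the lengths still satisfy $|u|=k_1+k_2$. Then $(k_1,k_2)\notin M$, while $(k_1+k_2,0)$ may lie in $M$, so the stated equivalence can fail.
\end{itemize}

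The expected obstacle is therefore this degenerate case $\omega_1=\varepsilon$. The plan is to read the statement there under the convention that only the one coordinate tracking the single block matters, i.e.\ with $k_2=0$ (and $\vec{x}_2=\omega_2=\varepsilon$). Under that reading the case is immediate from $M\subseteq\N\times\{0\}$. If that reading is not the intended one, a separate fix would be needed. The genuine content of the lemma is the uniqueness step supplied by Lemma~\ref{lem:len-unique-occ}.
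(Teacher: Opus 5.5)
Your argument for $\omega_1\neq\varepsilon$ is the paper's proof: direct substitution of $\sigma_1^{k_1},\sigma_2^{k_2}$ gives the ``if'' direction, and for ``only if'' Lemma~\ref{lem:len-unique-occ} forces $|u|=k_1$, hence $|v|=k_2$. Your concern about $\omega_1=\varepsilon$ is justified, and the paper's proof tacitly assumes $\omega_1\in\Sigma^+$ there. In that case the ``only if'' direction genuinely fails: for $p=a\,x_1$ one has $M=\N\times\{0\}$, yet $a\,\sigma_1^{0}\,\sigma_2^{1}=a\sigma_2\in\varepsilon L_{len}(p)$ while $(0,1)\notin M$.

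So do not hedge that ``a separate fix would be needed''. No fix can rescue the literal statement, so say outright that for $\omega_1=\varepsilon$ the lemma holds only with $k_2=0$, which your argument proves. This restricted form is all that Lemma~\ref{lem:len-twoblock-telltale} uses: with two blocks the normalization forces $\omega_1\neq\varepsilon$, and with one block every element of $\mg(M)$ has second coordinate $0$.
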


\begin{proof}
The ``if'' direction is immediate from the definition of equal-length relational pattern languages. The ``only if'' direction follows likewise with the help of Lemma~\ref{lem:len-unique-occ}, since every word in $\varepsilon L_{len}(p)$ is of the form $\omega_0\,u\,\omega_1\, v\,\omega_2$ for $(|u|,|v|)\in M$. Lemma~\ref{lem:len-unique-occ} forces $|u|=k_1$, so that $|v|=k_2$, implying $(k_1,k_2)\in M$.
\end{proof}

The next lemma shows that there are constraints on the placement of variable blocks in any pattern $p'$ whose language is contained in that of a pattern $p$ of form~(\ref{eqn:patternform}) and which generates the terminal word $\omega_0\omega_1\omega_2$ of $p$. 

\medskip

\begin{lemma}\label{lem:len-twoblock-positions}
Let \( p = \omega_0 \vec{x}_1 \omega_1 \vec{x}_2 \omega_2 \) be as in~(\ref{eqn:patternform}), and let \( L'\in\eLlen^{2} \) satisfy \( \omega_0\omega_1\omega_2 \in L' \subseteq \varepsilon L_{len} (p) \). Let $p'$ be a pattern with at most two variable blocks that generates $L'$. Then  $p'=\omega'_0\vec{x}'_1\omega'_1\vec{x}'_2\omega'_2$ where $\omega'_0,\omega'_1,\omega'_2\in\Sigma^\ast$, $\vec{x}'_1,\vec{x}'_2\in X^\ast$, and $\omega'_0\omega'_1\omega'_2=\omega_0\omega_1\omega_2$. Moreover, the following two statements hold.
\begin{enumerate}
    \item If $\vec{x}'_1\ne\varepsilon$, then $|\omega_0|\le|\omega'_0|\le|\omega_0\omega_1|$.
    \item If $\vec{x}'_2\ne\varepsilon$, then $|\omega_2|\le|\omega'_2|\le|\omega_1\omega_2|$.
\end{enumerate}
In particular, $p'$ has the prefix $\omega_0$ and the suffix $\omega_2$.
\end{lemma}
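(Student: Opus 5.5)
The plan is to exploit two facts: the word $\omega_0\omega_1\omega_2$ lies in $L'$, and every word of $L'$ lies in $\varepsilon L_{len}(p)$. Both the shape of $p'$ and the bounds on where its variable blocks sit follow from comparing suitably chosen words of $L'$ with the normal form~(\ref{eqn:M}).

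\emph{Terminal word of $p'$.} Write $p'=\omega'_0\vec{x}'_1\omega'_1\vec{x}'_2\omega'_2$ using the form~(\ref{eqn:patternform}), so $\omega'_0\omega'_1\omega'_2$ is the terminal word of $p'$. Erasing all variables shows that $\omega'_0\omega'_1\omega'_2\in L'$. Every word of $L'$ contains the terminal word of $p'$ as a subsequence, so it has length at least $|\omega'_0\omega'_1\omega'_2|$. Since $\omega_0\omega_1\omega_2\in L'$, this gives $|\omega'_0\omega'_1\omega'_2|\le|\omega_0\omega_1\omega_2|$. Symmetrically, $\omega'_0\omega'_1\omega'_2\in\varepsilon L_{len}(p)$ gives the reverse inequality. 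So the two words have equal length, and the subsequence containment $\omega'_0\omega'_1\omega'_2\sqsubseteq\omega_0\omega_1\omega_2$ is an equality.

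\emph{Statement 1.} Suppose $\vec{x}'_1\ne\varepsilon$ and let $x$ be a variable in $\vec{x}'_1$ with $n\ge1$ occurrences in $p'$. Substitute strings of a common length $k$ for all occurrences of $x$, erase every other variable, and choose the string placed at the first occurrence of $x$ in $\vec{x}'_1$ adversarially; all other occurrences may be filled arbitrarily. The result is a word $w\in L'\subseteq\varepsilon L_{len}(p)$, so $w=\omega_0u\omega_1v\omega_2$ with $(|u|,|v|)\in M$.
\begin{itemize}
\item To get $|\omega'_0|\ge|\omega_0|$, suppose instead $|\omega'_0|<|\omega_0|$. Then $w$ begins with $\omega'_0z$, where $z$ is the inserted block of length $k\ge 1$, and $w$ must have prefix $\omega_0$. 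Choose the first letter of $z$ different from $\omega_0[|\omega'_0|+1]$; this contradicts $w$ having prefix $\omega_0$, since $\omega'_0$ is a prefix of $\omega_0$ and position $|\omega'_0|+1$ of $w$ is the first letter of $z$.
\item To get $|\omega'_0|\le|\omega_0\omega_1|$, suppose $|\omega'_0|>|\omega_0\omega_1|$. Then $\omega'_0=\omega_0\omega_1\mu$ with $\mu$ a nonempty prefix of $\omega_2$, and $\omega'_1\omega'_2$ is a proper suffix of $\omega_2$. Now $w$ has suffix $\omega_2$, while the final $|\omega_2|$ letters of $w$ consist of a tail of the material after $\omega'_0$. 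For $k$ large, the last $|\omega_2|$ letters of $w$ are $\omega'_1\omega'_2$ preceded by letters coming from inserted blocks (or from later occurrences). The cleanest way to make this work is to pick the occurrence of $x$ so that the inserted string sits immediately before the portion of $w$ that must match the letter $\mu[|\mu|]$ of $\omega_2$, and fill it with the wrong letter.
\end{itemize}
I expect this second bullet to be the delicate point. I would handle it by instead filling every occurrence of every variable with a single letter, erasing all but $x$, and using a length/alignment count to show that the letter of $\omega_2$ aligned at position $|\omega'_0|$ must be realized inside an inserted block. A wrong choice of that block's letter then gives a contradiction.

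\emph{Statement 2 and the final claim.} Statement 2 follows by the mirror-image argument, reversing all words. The final claim then follows: if $\vec{x}'_1\ne\varepsilon$, Statement 1 gives prefix $\omega_0$; otherwise $\omega'_0\vec{x}'_1\omega'_1$ collapses to a terminal prefix, and the same first-bullet argument applies to $\vec{x}'_2$. If $p'$ has no variables, then $p'=\omega_0\omega_1\omega_2$ and the claim is trivial. The suffix $\omega_2$ is obtained symmetrically.
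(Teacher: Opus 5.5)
Your overall route is the paper's. You identify the terminal word of $p'$ via length and the shortest word of $L'$. You then prove each inequality of Statement~1 by exhibiting a word of $L'$ that violates the mandatory prefix $\omega_0$ or suffix $\omega_2$ of $\varepsilon L_{len}(p)$, and you get Statement~2 by reversal. Several parts are fine as written: the terminal-word step, the lower bound $|\omega'_0|\ge|\omega_0|$, the symmetry argument and the final ``in particular'' deduction. For the lower bound, your single-variable substitution with an adversarial letter at position $|\omega'_0|+1$ works as well as the paper's $\sigma^{|\omega_0|}$ substitution.

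The gap is the upper bound $|\omega'_0|\le|\omega_0\omega_1|$, which you only sketch. Your first formulation targets the wrong position. A string sitting \emph{immediately before} the position that must carry $\mu[|\mu|]$ does not affect that position. Also, when $x$ occurs in $\vec{x}'_2$, the last $|\omega_2|$ letters are not ``$\omega'_1\omega'_2$ preceded by inserted letters.'' Your second formulation is the right idea, but the alignment count is the whole content of this case and must be carried out. Write $w=\omega'_0Z_1\omega'_1Z_2\omega'_2$ with inserted blocks $Z_1\ne\varepsilon$ and $Z_2$. Then $|w|=|\omega_0\omega_1\omega_2|+|Z_1|+|Z_2|$ and $|\omega'_0|=|\omega_0\omega_1|+|\mu|$. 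Hence the letter $\omega_2[|\mu|]$ of the required suffix sits at position $|w|-|\omega_2|+|\mu|=|\omega'_0|+|Z_1|+|Z_2|$. This is the last letter of $Z_1$ if $x$ does not occur in $\vec{x}'_2$. Otherwise it lies inside $Z_2$ only if $|Z_2|>|\omega'_1|$, so you must take $k>|\omega'_1|$. Filling every occurrence of $x$ with $\tau^k$, where $\tau\ne\omega_2[|\mu|]$, then gives the contradiction.

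The paper avoids this bookkeeping by substituting $\tau^{|\omega_2|}$ for every variable of $p'$, with $\tau\ne\omega_2[1]$. The last non-empty inserted block then has length at least $|\omega_2|$. It is followed only by terminals of total length at most $|\omega'_1\omega'_2|<|\omega_2|$. So the last $|\omega_2|$ letters of the word begin with $\tau$, and the word does not end with $\omega_2$.
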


\begin{proof}
Since $L'\subseteq \varepsilon L_{len}(p)$, every word of $L'$ has length at least $|\omega_0\omega_1\omega_2|$; as $\omega_0\omega_1\omega_2\in L'$, the word $\omega_0\omega_1\omega_2$ is the shortest word in $L'$, and thus the terminal word of $p'$, by Observation~\ref{lem:len-normal}. It remains to show statements 1 and 2. We prove only statement 1; statement 2 follows by a completely symmetric argument.

Suppose $\vec{x}'_1\ne\varepsilon$, but $|\omega_0|>|\omega'_0|$ or $|\omega'_0|>|\omega_0\omega_1|$. 

If $|\omega_0|>|\omega'_0|$ (so that, in particular, $\omega_0\ne\varepsilon$), then let $\sigma\in\Sigma$ be a letter distinct from the last letter in $\omega_0$. Now consider a substitution for $p'$ that replaces every occurrence of every variable in $\vec{x}'_1$ with $\sigma^{|\omega_0|}$ and erases all other variables. This substitution generates a word in $\varepsilon L_{len}(p')$ ($=L'$) that does not begin with $\omega_0$. Since every word in $\varepsilon L_{len}(p)$ begins with $\omega_0$, this yields a contradiction to $L'\subseteq\varepsilon L_{len}(p)$.

If $|\omega'_0|>|\omega_0\omega_1|$, then $|\omega'_1\omega'_2|<|\omega_2|$, and we obtain a contradiction by considering a substitution for $p'$ that replaces every variable in $p'$ with $\tau^{|\omega_2|}$, where $\tau$ is a letter distinct from the first letter in $\omega_2$. Such a substitution generates a word that does not end with $\omega_2$, unlike all words in $\varepsilon L_{len}(p)$.
\end{proof}

\begin{lemma}\label{lem:len-twoblock-telltale}
Let $|\Sigma| \geq 2$, let $p=\omega_0 \vec{x}_1 \omega_1 \vec{x}_2 \omega_2$ be as in~(\ref{eqn:patternform}), and let $M$ be as in~(\ref{eqn:M}). If $\omega_1\ne\varepsilon$, choose letters $\sigma_1 \neq \omega_1[1]$ and $\sigma_2 \neq\omega_1[|\omega_1|]$; otherwise choose $\sigma_1,\sigma_2\in\Sigma$ arbitrarily. Then
\[
  T:=\{\omega_0\omega_1\omega_2\}\cup\{\omega_0\,\sigma_1^{k_1}\,\omega_1\, \sigma_2^{k_2}\,\omega_2 \mid (k_1,k_2) \in \mg(M)\}
\]
is a telltale of $\varepsilon L_{len}(p)$ with respect to $\eLlen^{2}$.
\end{lemma}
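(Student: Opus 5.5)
The plan is to check the two defining properties of a telltale: first that $T$ is a finite subset of $\varepsilon L_{len}(p)$, and then that every $L'\in\eLlen^{2}$ with $T\subseteq L'\subseteq\varepsilon L_{len}(p)$ equals $\varepsilon L_{len}(p)$.

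\emph{Finiteness and membership.} $M$ is a finitely generated submonoid of $\N^2$, generated by the gap vectors of the variables of $p$. Every minimal generator must occur among any generating set, so $\mg(M)$ is finite and $T$ is finite. We have $(0,0)\in M$ and $\mg(M)\subseteq M$. Hence Lemma~\ref{lem:len-twoblock-witness} gives $T\subseteq\varepsilon L_{len}(p)$.

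\emph{Normalising $p'$.} Let $L'=\varepsilon L_{len}(p')$ with $T\subseteq L'\subseteq\varepsilon L_{len}(p)$. Since $\omega_0\omega_1\omega_2\in T$, Lemma~\ref{lem:len-twoblock-positions} applies. So $p'=\omega'_0\vec{x}'_1\omega'_1\vec{x}'_2\omega'_2$ has the same terminal word, and any nonempty block sits at a position $j$ with $|\omega_0|\le j\le|\omega_0\omega_1|$ or $|\omega_0|\le |w|-j$ suitably. Let $M'$ be the monoid of $p'$ as in~(\ref{eqn:M}), taken with respect to the factorisation of $p'$.

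\emph{Key step.} We aim to show that for every $g=(k_1,k_2)\in\mg(M)$ there is a split of $w_g=\omega_0\sigma_1^{k_1}\omega_1\sigma_2^{k_2}\omega_2$ with respect to $p'$ whose two gap lengths form a vector $h(g)\in M'$. We then want these splits to be compatible. In the easy case both blocks of $p'$ sit exactly at offsets $|\omega_0|$ and $|\omega_0\omega_1|$, so $\omega'_1=\omega_1$. Then Lemma~\ref{lem:len-unique-occ} forces the unique split of $w_g$, so $h(g)=g\in M'$. Consequently $M=\langle\mg(M)\rangle\subseteq M'$, and by~(\ref{eqn:M}) $\varepsilon L_{len}(p)\subseteq L'$, giving equality.

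\emph{The general case (main obstacle).} A block of $p'$ may instead sit strictly inside $\omega_1$, or the two blocks may merge positions. For example, $p'$ may have a single block when $M$ has generators in both coordinates. The plan here has three parts.
\begin{enumerate}
\item Use $L'\subseteq\varepsilon L_{len}(p)$ with substitutions by a letter $\tau$ that differs from the adjacent letter of $\omega_1$. A block of $p'$ at offset $j$ inside $\omega_1$ can then produce only words of the form $\omega_0u\omega_1v\omega_2$. This should force $\omega_1$ to be, near position $j$, periodic compatible with $\tau$-insertions. In particular, inserting $\tau^{k}$ at offset $j$ should equal inserting suitable strings at offsets $|\omega_0|$ and $|\omega_0\omega_1|$ for every $\tau$, and this I expect to force $\omega_1[1..j]$ or $\omega_1[j+1..]$ to be a power of a single letter.
\item Using this, show that the set of words obtainable from a block at offset $j$ coincides with the words obtainable from the boundary position.
\item Then $p'$ can be replaced by an equivalent pattern of the easy form, and the argument above applies.
\end{enumerate}
If this rewriting fails, the fallback is to argue directly on lengths. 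Every split of $w_g$ along $p'$ yields a vector in $M'$, and conversely every vector of $M'$ embeds into $M$. From this we would derive $M'$-vectors that add up to the $M$-vector of an arbitrary word $\omega_0u\omega_1v\omega_2$, comparing minimal generators of $M$ and $M'$ coordinatewise. This step is where the binary-alphabet subtleties, namely the shift phenomenon of Section~\ref{sec:len-binary}, have to be excluded. For this we would use that only one internal terminal segment $\omega_1$ exists.
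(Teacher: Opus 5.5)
Your finiteness and membership argument is correct, and so is your ``easy case''. The easy case is exactly the last step of the paper's proof: once $p'=\omega_0\vec{x}'_1\omega_1\vec{x}'_2\omega_2$, Lemma~\ref{lem:len-twoblock-witness} gives $\mg(M)\subseteq M'$. The gap is everything else. Your ``general case'' is only a conjectural plan (``should'', ``I expect'', ``fallback''), and it misses the one idea that makes this case disappear: the specific words of $T$ determine where the blocks of $p'$ sit.

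If $\vec{x}_1\neq\varepsilon$, some $(k_1,k_2)\in\mg(M)$ has $k_1>0$. So $L'\supseteq T$ contains a word whose letter at position $|\omega_0|+1$ is $\sigma_1\neq\omega_1[1]$. By Lemma~\ref{lem:len-twoblock-positions}, no block of $p'$ starts before that position. If no block started exactly there, every word of $L'$ would begin with $\omega_0\,\omega_1[1]$, a contradiction. (If $\omega_1=\varepsilon$, then $p=\omega_0\vec{x}_1$ and this is trivial.) Symmetrically, $\sigma_2$ forces a block of $p'$ ending right before $\omega_2$ whenever $\vec{x}_2\neq\varepsilon$.

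When $p$ has two blocks, $\omega_1\neq\varepsilon$ keeps these two blocks of $p'$ distinct. Since $p'$ has at most two blocks, $p'=\omega_0\vec{x}'_1\omega_1\vec{x}'_2\omega_2$, which is your easy case. When $p=\omega_0\vec{x}_1\omega_1$ has one block, $L'\subseteq\varepsilon L_{len}(p)$ excludes a second block of $p'$ placed after a nonempty prefix $\mu_1$ of $\omega_1$. Filling that block with $\tau^m$, where $\tau$ differs from the last letter of $\mu_1$, yields a word not ending in $\omega_1$. Hence, once $T\subseteq L'$, blocks strictly inside $\omega_1$ and ``merged'' blocks never occur.

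Your steps 1--3 instead use only $L'\subseteq\varepsilon L_{len}(p)$ and try to rewrite a misplaced block into an equivalent boundary block. Step 2 of that plan is false. Take $\Sigma=\{a,b\}$, $p=x_1\,ab\,x_2$ and $p'=a\,x_1\,b$. Every binary word that starts with $a$ and ends with $b$ contains the factor $ab$, so $ab\in\varepsilon L_{len}(p')=a\Sigma^\ast b\subseteq\varepsilon L_{len}(p)$. Yet $a\Sigma^\ast b$ is not generated by any pattern with terminal word $ab$ whose blocks sit only before $a$ or after $b$. Any such language other than $\{ab\}$ contains a word beginning with $b$ or ending with $a$.

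So no rewriting into your easy form exists for this $p'$. It is eliminated only because $bab\in T$ (here $\sigma_1=b$, $\sigma_2=a$) does not lie in $a\Sigma^\ast b$. That is precisely the argument above, which your plan never makes. Your length-based fallback is too vague to close this gap.
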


\begin{proof}
Note that the letters $\sigma_1$ and $\sigma_2$ satisfy the assumption of Lemma~\ref{lem:len-unique-occ}. Thus, $T\subseteq \varepsilon L_{len}(p)$ follows from  Lemma~\ref{lem:len-twoblock-witness}.

Let $L'\in\eLlen^{2}$ with $T \subseteq L' \subseteq \varepsilon L_{len}(p)$. We will show $L' = \varepsilon L_{len}(p)$. The latter is immediate if $\vec{x}_1=\vec{x}_2=\varepsilon$, since then $T=L'=\varepsilon L_{len}(p)=\{\omega_0\omega_1\omega_2\}$. So suppose $p$ has at least one non-empty variable block, so that $\varepsilon L_{len}(p)$ is infinite and $|T|>1$.

Let $p'=\omega'_0\vec{x}'_1\omega'_1\vec{x}'_2\omega'_2$ be a pattern with at most two variable blocks such that $L'=\varepsilon L_{len}(p')$. By Lemma~\ref{lem:len-twoblock-positions}, we know $\omega'_0\omega'_1\omega'_2=\omega_0\omega_1\omega_2$, and $p'$ has the prefix $\omega_0$ and the suffix $\omega_2$. Moreover, since $T\subseteq \varepsilon L_{len}(p')$, the pattern $p'$ has at least one (non-empty) variable block.

We prove two crucial claims.

\smallskip

\emph{Claim 1: If $\vec{x}_1\ne\varepsilon$, then $p'$ starts with $\omega_0x$ for some $x\in X$.}  Since $\vec{x}_1 \neq \varepsilon$, some $(k_1,k_2)\in\mg(M)$ satisfies $k_1>0$. Therefore, the word $\omega_0\,\sigma_1^{k_1}\,\omega_1\, \sigma_2^{k_2}\,\omega_2 \in T\subseteq L'$ has the letter $\sigma_1$ immediately after $\omega_0$. If the leftmost non-empty variable block of $p'$ began at a position $d>|\omega_0|+1$, every word of $L'$ would begin with $(\omega_0\omega_1\omega_2)[1:d-1]$, whose letter immediately after $\omega_0$ is $\omega_1[1] \neq \sigma_1$. Therefore, the leftmost non-empty variable block of $p'$ begins at a position $d\le|\omega_0|+1$. By Lemma~\ref{lem:len-twoblock-positions}, $p'$ has the prefix $\omega_0$, so that $d=|\omega_0|+1$. This proves Claim 1.

\emph{Claim 2: If $\vec{x}_2\ne\varepsilon$, then $p'$ ends with $y\omega_2$ for some $y\in X$.} The proof of this claim is symmetric to the proof of Claim 1.


\smallskip

Let us distinguish two cases.

\emph{Case 1: $p$ has two variable blocks.}  Since $p$ and $p'$ have the same terminal word, with Claim 1 and Claim 2 we obtain that $p'$ has two variable blocks, starts with $\omega_0x$ and ends with $y\omega_2$, where $x,y\in X$. This yields $p'=\omega_0 \vec{x}'_1 \omega_1 \vec{x}'_2 \omega_2$ where $\vec{x}'_1,\vec{x}'_2$ are non-empty variable blocks. Let $M'$ be the monoid generated by the set of gap vectors of $p'$. Since $\varepsilon L_{len}(p')$ contains every word of the form $\omega_0\,\sigma_1^{k_1}\,\omega_1\, \sigma_2^{k_2}\,\omega_2$ where $(k_1,k_2) \in \mg(M)$, we obtain $\mg(M)\subseteq M'$, and thus $M\subseteq M'$. This implies $\varepsilon L_{len}(p)\subseteq\varepsilon L_{len}(p')$, i.e., $\varepsilon L_{len}(p')=\varepsilon L_{len}(p)$.

\emph{Case 2: $p$ has exactly one variable block.} By our normalization, $\vec{x}_1\ne\varepsilon=\vec{x}_2$. 
Then $p=\omega_0\vec{x}_1\omega_1\omega_2$. By Claim 1, $p'$ starts with $\omega_0\vec{x}'_1$. Specifically, $p'=\omega_0\vec{x}'_1\mu_1\vec{x}'_2\mu_2$, where $\mu_1\mu_2=\omega_1\omega_2$. If $p'$ has two variable blocks, i.e., $\vec{x}'_2\ne\varepsilon$ and $\mu_1\ne\varepsilon$, then $p'$ can generate strings not ending in $\omega_1\omega_2$, contradicting $\varepsilon L_{len}(p')\subseteq \varepsilon L_{len}(p)$. So $p'$ has only one variable block, and $p'=\omega_0\vec{x}'_1\omega_1\omega_2$. Let $M'$ be the monoid generated by the set of gap vectors of $p'$. As in Case 1, we obtain $\mg(M)\subseteq M'$, and thus $M\subseteq M'$, which finally yields $\varepsilon L_{len}(p')=\varepsilon L_{len}(p)$.
\end{proof}


We are now ready to prove Theorem~\ref{thm:twoblocks}.

\begin{proof} \emph{(of Theorem~\ref{thm:twoblocks}.)}
Immediate from Lemma~\ref{lem:len-twoblock-telltale} and Angluin's characterization, which states that an indexed family $(L_i)_{i\in\mathbb{N}}$ of languages is learnable in the limit from positive data if and only if there is an effective procedure that, given an index $i$, enumerates a telltale for $L_i$ with respect to $\{L_i\mid i\in\mathbb{N}\}$~\cite{Angluin80telltales}. Here note that the telltales in Lemma~\ref{lem:len-twoblock-telltale} can obviously be enumerated effectively, since the set of minimal generators of the underlying monoid $M$ is a finite subset of the corresponding set of gap vectors.
\end{proof}

It is worth noting that the telltales of Lemma~\ref{lem:len-twoblock-telltale} are of size linear in the length of the underlying patterns, both in terms of the number of words in the telltales, and in terms of the length of these words. 

\section{Iterative Learning of Patterns with at Most One Variable Block}
\label{sec:len-oneblock}

An iterative learner computes each hypothesis from its previous hypothesis and the current example only, ignoring the examples seen in previous steps~\cite{Wiehagen76}. Lange and Wiehagen~\cite{LangeW91} showed that Angluin's pattern languages can be learned iteratively in the limit from positive data. We now show that the equal-length pattern languages with at most one variable block can be learned iteratively as well, without any memory beyond the hypothesis, and that this holds even over a binary alphabet.

\medskip

\begin{definition}[\cite{Wiehagen76}]
    A learner $A$ is iterative if, for any input $(t_0,t_1,\ldots,t_n)$ with $n>0$, the output $A(t_0,t_1,\ldots,t_n)$ depends only on $t_n$ and $A(t_0,t_1,\ldots,t_{n-1})$.
\end{definition}

\medskip

Let $\eLlen^{1}$ be the class of the languages $\varepsilon L_{\mathrm{len}}(p)$ of patterns of the form $p=\omega\,\vec{x}\,\overline{\omega}$ with $\omega,\overline{\omega}\in\Sigma^\ast$, $\vec{x}\in X^\ast$, and $p\ne\varepsilon$, i.e., patterns with at most one variable block. If the variables in $p$ occur with multiplicities $g_1,\dots,g_k$ and $M=\langle g_1,\dots,g_k\rangle\subseteq\N$ is the numerical monoid they generate, Observation~\ref{lem:len-normal} gives
\begin{equation}\label{eq:len-oneblock}
  \varepsilon L_{\mathrm{len}}(p)=\{\omega\,u\,\overline{\omega}\mid u\in\Sigma^\ast,\ |u|\in M\}=:\omega\,\Sigma^{M}\,\overline{\omega}.
\end{equation}
Every finite set of positive integers occurs as the set of variable multiplicities of such a pattern. Note that, within this section, $M$ is always a submonoid of $\mathbb{N}$.

Given an unknown target pattern $p$ of this form, a learner has to identify three items: the prefix $\omega$, the suffix $\overline{\omega}$, and the monoid $M$. The prefix and the suffix are the longest common prefix and suffix of all examples, and these can be updated one example at a time. Once they are correct, the length of every example minus $|\omega\overline{\omega}|$ is an element of $M$, and the learner collects these elements, keeping only the minimal generators. All of this information can be stored in the hypothesis itself, which can then be updated as needed, so that the learner is iterative.

\paragraph{Notation.} For words $u,w$ let $\lcp(u,w)$ and $\lcs(u,w)$ be their longest common prefix and suffix, and for a set $D$ of words let $\lcp(D)$ and $\lcs(D)$ be the longest common prefix and suffix of all words in $D$. Clearly,  \begin{equation}\label{eqn:prefixsuffix}
    (L=\omega\Sigma^M\overline{\omega}\mbox{ with }M\neq\{0\})\mbox{ implies }(\lcp(L)=\omega\mbox{ and } \lcs(L)=\overline{\omega}).
\end{equation}

\paragraph{The learner $A_{\mathrm{it}}$.} In step 0, on the first example $t_0$, the learner $A_{\mathrm{it}}$ outputs the pattern $p_0:=t_0$. 
In step $n\ge 1$, the learner has access to input datum $t_n$ and its previous hypothesis, which is a pattern $p_{n-1}$.
\begin{enumerate}
\item If $p_{n-1}=t_n$, the learner outputs $p_n:=p_{n-1}=t_n$.\\
(Note that the learner will output a hypothesis $w\in\Sigma^*$ if and only if all examples seen so far equal $w$.)
\item If $p_{n-1}\in\Sigma^*$, but $p_{n-1}\ne t_n$, then let $r:=|t_n|-|\lcp(p_{n-1},t_n)|-|\lcs(p_{n-1},t_n)|$. The learner \emph{resets the (currently empty) variable block}:
\begin{itemize}
    \item If $r>0$, then the learner outputs  $p_n:=\lcp(p_{n-1},t_n)x_1^r\lcs(p_{n-1},t_n)$; this entails the conjecture that $M=\langle r\rangle$.
    \item  If $r\le 0$, then the learner outputs  $p_n:=\lcp(p_{n-1},t_n)x_1x_2\lcs(p_{n-1},t_n)$. 
\end{itemize}
(Note that, in case $r=1$, the hypothesis $p_n:=\lcp(p_{n-1},t_n)x_1\lcs(p_{n-1},t_n)$ is equivalent to a hypothesis of the form $\lcp(p_{n-1},t_n)x_1x_2\lcs(p_{n-1},t_n)$ that may be output in some situation where $r\le 0$, but the learner uses two different representations of the same language to be able to distinguish these two cases in step $n+1$. This is the only form in which the learner uses two representations of the same language.)
\item If $p_{n-1}=v\vec{x}\bar{v}$ for $v,\bar{v}\in\Sigma^*$ and $\vec{x}\in X^+$, then let $v':=\lcp(v,t_n)$, $\bar{v}':=\lcs(\bar{v},t_n)$, and $r:=|t_n|-|v'|-|\bar{v}'|$. 
\begin{itemize}
\item If $(v',\bar{v}')=(v,\bar{v})$, and $\vec{x}=x_1x_2$, then the learner outputs (i) $p_n:=v'\,x_1^r\,\bar{v}'$, if $r>0$, which entails the conjecture that $M=\langle r\rangle$, and outputs (ii) $p_n:=v'\,x_1x_2\,\bar{v}'$, if $r\le 0$.
\item If $(v',\bar{v}')=(v,\bar{v})$, $\vec{x}\ne x_1x_2$, and $\vec{x}=x_1^{g_1}\cdots x_k^{g_k}$ where $k>0$ and $g_i>0$ for $i\in\{1,\ldots,k\}$, then (i)~if $r\le 0$, the learner outputs $p_n:=p_{n-1}$; (ii)~if $r>0$, the learner outputs $p_n:=v'\,x_1^{g'_1}x_2^{g'_2}\cdots x_l^{g'_l}\,\bar{v}'$, where $\{g'_1,g'_2\cdots g'_l\}=\mg(\{g_1,\ldots,g_k,r\})$ and $g'_1<g'_2<\ldots <g'_l$. This entails the conjecture that $M=\langle g_1,\ldots,g_k,r\rangle$.
\item If $(v',\bar{v}')\ne (v,\bar{v})$, then the learner \emph{resets the variable block}:
\begin{itemize}
    \item If $r>0$, then the learner outputs  $p_n:=v'x_1^r\bar{v}'$; this entails the conjecture that $M=\langle r\rangle$.
    \item  If $r\le 0$, then the learner outputs  $p_n:=v'x_1x_2\bar{v}'$. 
\end{itemize}
\end{itemize}
\end{enumerate}
By definition, $A_{\mathrm{it}}$ is an iterative learner. It should be noted that hypotheses of the form $v\,x_1x_2\,\bar{v}$ are the only ones carrying information beyond the language of the hypothesis; they mark transient states at which the learner knows that at least one variable must occur in the target pattern, but nothing is known about the multiplicities of variables as of yet. Hypotheses of this form are never used in the limit; i.e., the learner's output on a valid text for a language in $\eLlen^{1}$ will not converge to such a hypothesis (see the proof of Theorem~\ref{thm:len-oneblock} for details).

\medskip

\begin{example}
Let $L=ab\,\Sigma^{\langle2,3\rangle}\,b$ over $\Sigma=\{a,b\}$, and consider a text beginning with $abaab$, $abbbbb$, $abaaab$, $ababb$.
\begin{itemize}
  \item After $abaab$ the learner $A_{\mathrm{it}}$ outputs $abaab$.
  \item After $abbbbb$: $\lcp(abaab,abbbbb)=ab$, $\lcs(abaab,abbbbb)=b$, and $r=6-2-1=3$. The learner outputs $ab\, x_1^3\, b$.
  \item After $abaaab$: the prefix and suffix do not change ($v=v'=ab$, $\bar{v}=\bar{v}'=b$) and $r=3$, so the output stays $ab\, x_1^3\, b$.
  \item After $ababb$: the prefix and suffix do not change ($v=v'=ab$, $\bar{v}=\bar{v}'=b$) and $r=2$. Since $\mg(\{2,3\})=\{2,3\}$, the learner $A_{\mathrm{it}}$ outputs $ab\,x_1^2x_2^3\,b$, which generates $L$.
\end{itemize}
\end{example}

\begin{theorem}\label{thm:len-oneblock}
Let $|\Sigma|\ge2$. Then $A_{\mathrm{it}}$ learns $\eLlen^{1}$ in the limit from positive data. In particular, $\eLlen^{1}$ is learnable in the limit from positive data by an iterative learner.
\end{theorem}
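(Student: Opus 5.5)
We fix a target $L=\omega\,\Sigma^{M}\,\overline{\omega}\in\eLlen^{1}$ and an arbitrary text $t_0,t_1,\dots$ for $L$, and track the triple consisting of the current prefix, the current suffix and the current variable block of $p_n$. The trivial case $M=\{0\}$ comes first. Here $L=\{\omega\overline{\omega}\}$, every $t_n$ equals $t_0$, and step~1 keeps $p_n=t_0$ forever, which is correct. So assume $M\neq\{0\}$, so that $L$ is infinite and, by (\ref{eqn:prefixsuffix}), $\lcp(L)=\omega$ and $\lcs(L)=\overline{\omega}$.

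First I would establish an invariant by induction on $n$. If $p_n\notin\Sigma^*$, then $p_n=v_n\vec{x}\,\bar v_n$ with $v_n=\lcp(t_0,\dots,t_n)$ and $\bar v_n=\lcs(t_0,\dots,t_n)$. This follows because $\lcp(\lcp(D),t)=\lcp(D\cup\{t\})$, and likewise for suffixes, and every branch of steps~2 and~3 sets the new prefix and suffix to exactly these values. Since $L$ is infinite, some $t_n\neq t_0$ appears, after which $p_n\notin\Sigma^*$ permanently, because step~3 never returns a terminal word. The sequence $\lcp(t_0,\dots,t_n)$ is non-increasing in length and eventually equals $\lcp(L)=\omega$, since finitely many examples already witness the longest common prefix of $L$. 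The suffix behaves the same way. Hence there is $n_1$ such that for all $n\ge n_1$ we have $(v_n,\bar v_n)=(\omega,\overline{\omega})$ and no reset occurs after step $n_1$.

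Second, for $n>n_1$ every example satisfies $t_n\in\omega\Sigma^M\overline{\omega}$, so the computed quantity is $r=|t_n|-|\omega\overline{\omega}|\in M$. I claim that from the step where the learner first outputs a block of the form $x_1^{g_1}\cdots x_k^{g_k}\neq x_1x_2$, the set $G$ of multiplicities always satisfies $G\subseteq M$ and $G=\mg(G)$. The first such $G$ is $\{r\}$ for some $r>0$. Note that $r>0$ must eventually occur after $n_1$, because $M\neq\{0\}$ forces words of positive gap length to appear; this is also why the transient form $x_1x_2$ is left. Moreover, the condition "$\vec{x}\ne x_1x_2$" correctly distinguishes $x_1^1$ from $x_1x_2$. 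Subsequent updates replace $G$ by $\mg(G\cup\{r\})$ with $r\in M$. Consequently $\langle G\rangle$ is a monotonically non-decreasing chain of submonoids of $M$.

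Third, and this is the main obstacle, I need to show the chain stabilizes at $M$ and that the syntactic hypothesis stabilizes too. For convergence to $M$: $\mg(M)$ is finite, and each of its elements $g$ appears as $r$ for some later example, for instance $\omega a^g\overline{\omega}$. At that point $g\in\langle G\rangle$ holds, so eventually $\mg(M)\subseteq\langle G\rangle\subseteq M$, giving $\langle G\rangle=M$. For syntactic stability: once $\langle G\rangle=M$, every further $r\in M$ yields $\mg(G\cup\{r\})=\mg(\langle G\rangle)=\mg(M)$. Since $G=\mg(G)$ already equals $\mg(M)$ by the invariant, the output, which lists the generators sorted in increasing order, is literally unchanged. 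The $r\le0$ subcase keeps $p_{n-1}$ as well. Hence $p_n$ converges to $\omega\,x_1^{g'_1}\cdots x_l^{g'_l}\,\overline{\omega}$ with $\{g'_i\}=\mg(M)$, which generates $L$ by (\ref{eq:len-oneblock}).

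Finally, iterativeness is immediate from the definition of $A_{\mathrm{it}}$. Defining $A_{\mathrm{it}}$ on all inputs is also clear, since $\mg$ of a finite set of positive integers is computable.
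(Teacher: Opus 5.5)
\textbf{Gap in your third step.} You claim that every $g\in\mg(M)$ ``appears as $r$ for some later example, for instance $\omega a^g\overline{\omega}$.'' A text need only present each word of $L$ once, and $L$ has just $|\Sigma|^g$ words of gap length $g$. All of them, in particular $\omega a^g\overline{\omega}$, may already have been presented before the last reset.

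Because $A_{\mathrm{it}}$ is iterative, a reset discards all multiplicity information. The new block is built from the current example alone, and afterwards $\langle G\rangle$ is generated only by the values $|t_j|-|\omega\overline{\omega}|$ for $j$ from the last reset onward. For the same reason, your $n_1$ should be the last reset step itself, since examples between that reset and a later $n_1$ also feed into $G$.

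What you need is the claim that for every $m\in M\setminus\{0\}$, some word of length $|\omega\overline{\omega}|+m$ is presented at or after the last reset. This is exactly where $|\Sigma|\ge 2$ enters, and your proposal never uses it. Without it the conclusion is false. Over $\Sigma=\{a\}$, take the target $a^{\langle 2,3\rangle}$ and the text $aa,aaa,\varepsilon,a^4,a^5,a^6,\dots$. The learner moves to $aa\,x_1x_2\,aa$, resets to $x_1x_2$ on $\varepsilon$, and then converges to $x_1^4x_2^5x_3^6x_4^7$, which does not generate $aa$.

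\textbf{How the paper closes it.} It uses a case distinction on the last reset, say at step $n_0$.

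If this was the first reset, all earlier examples equal a single word $w$. Since $L$ contains $|\Sigma|^m\ge 2$ words of length $|\omega\overline{\omega}|+m$, one of them differs from $w$ and must appear at some step $\ge n_0$.

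Otherwise the reset strictly shortened the prefix or the suffix. Suppose the previous prefix $v$ strictly extended $\omega$, and let $\sigma$ be the letter of $v$ following $\omega$. Then all examples before step $n_0$ begin with $\omega\sigma$. The word $\omega\bar\sigma u'\overline{\omega}\in L$ with $\bar\sigma\ne\sigma$ and $|u'|=m-1$ does not, so it appears at some step $\ge n_0$. The suffix case is symmetric.

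With this claim supplied, the rest of your argument matches the paper's and is fine: the monotone chain $\langle G\rangle\subseteq M$, the invariant $G=\mg(G)$, and syntactic stabilization once $\langle G\rangle=M$.
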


\begin{proof}
Let $p=\omega\vec{x}\bar{\omega}$ be the target pattern, and let $s:=\omega\bar{\omega}$. Let $D_n=\{t_0,t_1,\ldots,t_n\}$ be the set of examples collected up until step $n$.

\emph{Invariant.} By induction on $n$, the following facts are immediate from the definition of $A_{\mathrm{it}}$:
\begin{enumerate}
    \item If $A_{\mathrm{it}}$'s hypothesis at the end of step $n$  is $w\in\Sigma^*$, then $D_n=\{w\}$.
    \item If $A_{\mathrm{it}}$'s hypothesis at the end of step $n$ is $vx_1x_2\bar{v}$, then $v=\lcp(D_n)$, $\bar{v}=\lcs(D_n)$ and, if $n_0$ is the last time step at which $A_{\mathrm{it}}$ reset the variable block, then $|t_j|-|v|-|\bar{v}|\le 0$ for $n_0\le j\le n$. 
    \item If $A_{\mathrm{it}}$'s hypothesis at the end of step $n$  is $vx_1^{g_1}\cdots x_k^{g_k}\bar{v}$, then $v=\lcp(D_n)$, $\bar{v}=\lcs(D_n)$ and, if $n_0$ is the last time step at which $A_{\mathrm{it}}$ reset the variable block, then $\langle g_1,\ldots, g_k \rangle$ is generated by the strictly positive values $|t_j|-|v|-|\bar{v}|$ for $n_0\le j\le n$. 
\end{enumerate} 


Observe that the learner $A_{\mathrm{it}}$ resets the variable block only a finite number of times. This is true because, after the first reset the learner never outputs a word in $\Sigma^*$, and every further reset strictly shortens $v$ or $\bar{v}$, both of which have length at most $|t_0|$.

Let the target be $L=\omega\Sigma^M\overline{\omega}$ and recall $s=\omega\overline{\omega}$. If $M=\{0\}$, then $L=\{s\}$, every example equals $s$, and the learner outputs $s$ forever.

Let $M\neq\{0\}$. Then $L$ is infinite, so the learner must reset the variable block at least once. Since $A_{\mathrm{it}}$ resets the variable block only a finite number of times, there is some time step $n_0$ at which the last reset has occurred. 
From then on, $v$ and $\bar{v}$ do not change. By the invariant, we have $v=\lcp(D_n)=\lcp(D_{n_0})=\omega$ and $\bar{v}=\lcs(D_n)=\lcs(D_{n_0})=\bar{\omega}$ for all $n\ge n_0$. (Note that the text eventually contains two words whose $\lcp$ is exactly $\omega$, which exist by (\ref{eqn:prefixsuffix}) since $M\ne\{0\}$. Likewise, the text eventually contains two words whose $\lcs$ is exactly $\bar{\omega}$.) Consequently, $|t_j|-|v|-|\bar{v}|=|t_j|-|s|\in M$ for every example $t_j$. Hence, from step $n_0$ onward, the learner outputs a pattern of the form $\omega\vec{x}'\bar{\omega}$ such that either $\vec{x}'=x_1x_2$ or $\varepsilon L_{\mathrm{len}}(\omega\vec{x}'\bar{\omega})\subseteq \varepsilon L_{\mathrm{len}}(p)$.

We now show the key claim: \emph{for every $m\in M\setminus\{0\}$, some word of $L$ of length $|s|+m$ appears in the text at a step $j\ge n_0$.} Since every word of $L$ appears in the text, it suffices to find a word of $L$ of length $|s|+m$ that is not among the examples before step $n_0$.
\begin{itemize}
  \item If the restart at $n_0$ was the first one, all earlier examples equal a single word $w$, while $L$ contains $|\Sigma|^m\ge2$ words of length $|s|+m$.
  \item Otherwise, the restart at $n_0$ shortened $v$ or $\bar{v}$. Before step $n_0$, the words $v$ and $\bar{v}$ were the longest common prefix and suffix of some words of $L$, so they extended $\omega$ and $\overline{\omega}$, respectively, and at least one of them was a strict extension---say $v$; the case of $\bar{v}$ is symmetric. Let $\sigma$ be the letter of $v$ following $\omega$. All examples before step $n_0$ begin with $\omega \sigma$, but the word $\omega \bar{\sigma}u'\overline{\omega}\in L$ with $\bar{\sigma}\neq \sigma$, $\bar{\sigma}\in\Sigma$, and $|u'|=m-1$ does not.
\end{itemize}

Let $g_1,\dots,g_k$ be the multiplicities of the variables in the target pattern, where we assume without loss of generality that $\{ g_1,\dots,g_k\}=\mg(\{ g_1,\dots,g_k\})$. Note that $\langle g_1,\dots,g_k\rangle=M$. Let $g^{(n)}_1,\dots,g^{(n)}_{k^{(n)}}$ be the multiplicities of the variables in the pattern output by $A_{\mathrm{it}}$ in step $n$. By the claim and the invariant, there is $n_1\ge n_0$ such that $\langle g^{(n)}_1,\dots,g^{(n)}_{k^{(n)}}\rangle$ contains $g_1,\dots,g_k$ for all $n\ge n_1$, and therefore $\langle g^{(n)}_1,\dots,g^{(n)}_{k^{(n)}}\rangle=M$. Since $\mg$ depends only on the generated monoid, $\{g^{(n)}_1,\dots,g^{(n)}_{k^{(n)}}\}=\mg(M)$ for $n\ge n_1$. So the learner outputs a fixed pattern that generates $L$ from time step $n_1$ onward.
\end{proof}


\section{Conclusions}
\label{sec:len-summary}

This paper resolved an open question on the learnability of erasing equal-length relational pattern languages in the limit. In particular, it was shown that, over a binary alphabet, $\eLlen$ is not learnable from positive data: the language of the relational pattern $p:=x_1\,a\,a\,x_1\,x_2x_2\,b\,x_3$ is the union of a strictly increasing chain of languages of the class and thus has no telltale (Theorem~\ref{thm:len-no-telltale}). Since $p$ as well as all the underlying patterns in this chain have exactly three variable blocks, our non-learnability result already holds for the subclass of languages generated by patterns with exactly three variable blocks. 

This result was contrasted with two positive learnability results for structurally much simpler subclasses:
\begin{itemize}
\item The subclass of languages generated by patterns with at most two variable blocks is learnable in the limit from positive data, for every alphabet with at least two letters (Theorem~\ref{thm:twoblocks}).
 \item The subclass of languages generated by patterns with at most one variable block is learnable in the limit from positive data by an iterative learner, for every alphabet with at least two letters (Theorem~\ref{thm:len-oneblock}).
\end{itemize}

Given that (i) for alphabets of size at least three, the whole class of erasing equal-length relational pattern languages is learnable in the limit from positive data  \cite{MousawiZ25}, and (ii) a wide variety of erasing equal-length relational pattern languages are known for which telltales exist even in case of binary alphabets \cite{MousawiZ25}, a natural next step is to characterize those patterns whose equal-length languages over binary alphabets have telltales.

\section*{Disclosure}
Generative AI tools (specifically Anthropic's Claude Opus 5.5) were used to help with working out details of some of our technical proofs, as well as with the proofreading. The authors assume full responsibility for the entire content and presentation of the paper.

\section*{Acknowledgements}
S.\ Zilles was supported through a Canada CIFAR AI Chair at the Alberta Machine Intelligence Institute (Amii), through an NSERC Canada Research Chair, through the New Frontiers in Research Fund under grant no.\ NFRFE-2023-00109 and through the NSERC Discovery Grants program under application no.\ RGPIN-2026-05856.

 \bibliography{bib}
\end{document}